\providecommand{\U}[1]{\protect\rule{.1in}{.1in}}
\newtheorem{theorem}{Theorem}[section]
\newtheorem{proposition}[theorem]{Proposition}
\numberwithin{equation}{section}
\begin{document}
\title[KdV equation]{On classical solutions of the KdV equation}
\author{Sergei Grudsky}
\address{Departamento de Matematicas, CINVESTAV del I.P.N. Aportado Postal 14-740,
07000 Mexico, D.F., Mexico.}
\email{grudsky@math.cinvestav.mx.}
\author{Alexei Rybkin}
\address{Department of Mathematics and Statistics, University of Alaska Fairbanks, PO
Box 756660, Fairbanks, AK 99775}
\email{arybkin@alaska.edu}
\thanks{SG is supported by CONACYT grant 238630. AR is supported in part by the NSF
grant DMS-1716975.}
\date{May, 2019}
\subjclass{34L25, 37K15, 47B35}
\keywords{KdV equation, Hankel operators.}

\begin{abstract}
We show that if the initial profile $q\left(  x\right)  $ for the Korteweg-de
Vries (KdV) equation is essentially semibounded from below and $\int^{\infty
}x^{5/2}\left\vert q\left(  x\right)  \right\vert dx<\infty,$ (no decay at
$-\infty$ is required) then the KdV has a unique global classical solution
given by a determinant formula. This result is best known to date.

\end{abstract}
\maketitle
\dedicatory{We dedicate this paper to the memory of Jean Bourgain.}

\section{Introduction}

We are concerned with the Cauchy problem for the Korteweg-de Vries (KdV)
equation%
\begin{equation}%
\begin{cases}
\partial_{t}u-6u\partial_{x}u+\partial_{x}^{3}u=0,\ \ \ x\in\mathbb{R}%
,t\geq0\\
u(x,0)=q(x).
\end{cases}
\label{KdV}%
\end{equation}
As is well-known, (\ref{KdV}) is the first nonlinear evolution PDE solved in
the seminal 1967 Gardner-Greene-Kruskal-Miura paper \cite{GGKM67} by the
method which is now referred to as the inverse scattering transform (IST).
Much of the original work was done under generous assumptions on initial data
$q$ (typically from the Schwartz class) for which the well-posedness of
(\ref{KdV}) was not an issue even in the classical sense\footnote{I.e., at
least three times continuously differentiable in $x$ and once in $t$.}. But
well-posedness in less nice function classes becomes a problem. The main (but
of course not the only) difficulty is related to slower decay of $q$ at
infinity which negatively affects regularity of the solutions. This issue drew
much of attention once (\ref{KdV}) became in the spot light. For the earlier
literature account we refer the reader to the substantial 1987 paper
\cite{CohenKappSIAM87} by Cohen-Kappeler. The main result of
\cite{CohenKappSIAM87} says that if\footnote{$\int^{\infty}\left\vert f\left(
x\right)  \right\vert dx<\infty$ means that $\int_{a}^{\infty}\left\vert
f\left(  x\right)  \right\vert dx<\infty$ for all finite $a$.}%
\begin{align}
\int_{-\infty}^{\infty}\left(  1+\left\vert x\right\vert \right)  \left\vert
q\left(  x\right)  \right\vert dx &  <\infty,\ \ \label{Cohen Kappeler cond 1}%
\\
\int^{\infty}\left(  1+\left\vert x\right\vert \right)  ^{N}\left\vert
q\left(  x\right)  \right\vert dx &  <\infty,\ \ N\geq
11/4\label{Cohen Kappeler cond 2}%
\end{align}
then (\ref{KdV}) has a classical solution, the initial condition being
satisfied in the Sobolev space $H^{-1}\left(  a,\infty\right)  $ for any real
$a$. The uniqueness was not proven in \cite{CohenKappSIAM87} and in fact it
was stated as an open problem. The best known uniqueness result back then was
available for $H^{3/2}\left(  \mathbb{R}\right)  $ which of course assumes
some smoothness whereas the conditions (\ref{Cohen Kappeler cond 1}%
)-(\ref{Cohen Kappeler cond 2}) do not. Since any function subject to
(\ref{Cohen Kappeler cond 1})-(\ref{Cohen Kappeler cond 2}) can be properly
included in $H^{s}\left(  \mathbb{R}\right)  $ with any negative $s$, a
well-posedness statement in $H^{s}\left(  \mathbb{R}\right)  ,s<0$, would turn
the Cohen-Kappeler existence result into a classical well-posedness. The $s=0$
bar was reached in 1993 in the seminal papers by Bourgain \cite{Bourgain1993}
where, among others, he proved that (\ref{KdV}) is well-posed in $L^{2}\left(
\mathbb{R}\right)  $. Moreover his trademark harmonic analysis techniques
could be pushed below $s=0$. We refer the interested reader to the influential
\cite{ColKeStaTao03} for the extensive literature prior to 2003. Until very
recently, the best well-posedness Sobolev space for (\ref{KdV}) remained
\cite{Killip2018} $H^{-3/4}\left(  \mathbb{R}\right)  $. Note that harmonic
analysis methods break down while crossing $s=-3/4$ in an irreparable way.
Further improvements required utilizing complete integrability of the KdV. The
breakthrough has just occurred in Killip-Visan \cite{Killip2018} where $s=-1$
was reached. That is, (\ref{KdV}) is well-posed for initial data of the form
$q=v+w^{\prime}$ where $v,w\in L^{2}\left(  \mathbb{R}\right)  $. For $s<-1$
the KdV is ill-posed in $H^{s}\left(  \mathbb{R}\right)  $ scale (see
\cite{Killip2018} for relevant discussions and the literature cited therein).

However all these spectacular achievements do not answer the natural question
about the optimal rate of decay of initial data guaranteeing the existence of
a classical solution to (\ref{KdV}) free of a priori smoothness of $q$?
Surprisingly enough, this important question seems to have been in the shadow
and to the best of our knowledge the Cohen-Kappeler conditions
(\ref{Cohen Kappeler cond 1})-(\ref{Cohen Kappeler cond 2}) have not been
fully improved. The current paper is devoted to this question. In particular,
we prove

\begin{theorem}
[Main Theorem]\label{MainThm}Suppose that a real locally integrable initial
profile $q$ in (\ref{KdV}) satisfies:%
\begin{equation}
\operatorname*{Sup}\limits_{\left\vert I\right\vert =1}\int_{I}\max\left(
-q\left(  x\right)  ,0\right)  \ dx<\infty,\ \ \ \ \ \text{(essential
boundedness from below)};\label{cond 1}%
\end{equation}%
\begin{equation}
\int^{\infty}\left(  1+\left\vert x\right\vert \right)  ^{N}\left\vert
q\left(  x\right)  \right\vert dx<\infty,\ \ \ N\geq5/2\ \ \ (\text{rate of
decay at}+\infty),\label{cond 2}%
\end{equation}
then the KdV equation has a unique classical solution $u\left(  x,t\right)  $
such that uniformly on compacts in $\mathbb{R\times R}_{+}$%
\begin{equation}
u\left(  x,t\right)  =\lim_{b\rightarrow-\infty}u_{b}(x,t),\label{convergence}%
\end{equation}
where $u_{b}(x,t)$ is the classical solution with the data $q_{b}=\left.
q\right\vert _{\left(  b,\infty\right)  }$. 
\end{theorem}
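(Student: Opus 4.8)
The plan is to run the inverse scattering transform on the truncated profile $q_{b}=q|_{(b,\infty)}$, which is supported on $[b,\infty)$ and subject to the strong weighted decay~(\ref{cond 2}) at $+\infty$, and then to pass to the limit $b\to-\infty$ in a representation of $u_{b}$ that is stable under this limit. For each fixed $b$ the Schr\"odinger operator $-\partial_{x}^{2}+q_{b}$ has absolutely continuous spectrum filling $[0,\infty)$ together with finitely many negative eigenvalues $-\kappa_{b,j}^{2}$, and its reflection coefficient $R_{b}$ lies in a Hardy--Sobolev class with two derivatives, which is precisely what the exponent $N\ge 5/2$ in~(\ref{cond 2}) provides. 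Evolving the scattering data by the KdV flow ($R_{b}(k)\mapsto R_{b}(k)e^{8ik^{3}t}$, $\kappa_{b,j}$ fixed, norming constants growing exponentially in $t$) and reconstructing gives
\[
u_{b}(x,t)=-2\,\partial_{x}^{2}\log\det\!\bigl(I+\mathbb{M}_{b}(x,t)\bigr),
\]
where $\mathbb{M}_{b}(x,t)$ is the Marchenko operator on $L^{2}(x,\infty)$; realized in Fourier variables it is a Hankel operator with symbol $R_{b}(k)e^{2ikx+8ik^{3}t}$ plus a finite-rank piece coming from the bound states. This $u_{b}$ is a classical solution of~(\ref{KdV}), of class $C^{3}$ in $x$ and $C^{1}$ in $t$. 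This step is the classical theory, sharpened only in the weight, so it is the routine part.

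\textbf{The limit $b\to-\infty$.} Here is the analytic core. Hypothesis~(\ref{cond 1}) says $q^{-}$ is uniformly locally integrable, hence infinitesimally form-bounded with respect to $-\partial_{x}^{2}$; since $q_{b}\ge -q^{-}$, the operators $-\partial_{x}^{2}+q_{b}$ are bounded from below \emph{uniformly in} $b$, so all the $\kappa_{b,j}$ lie in a fixed interval $(0,K]$. The number $m_{b}$ of bound states does, in general, tend to $\infty$, so the limiting operator carries infinitely many solitons (a soliton gas) and $\det(I+\mathbb{M}_{b})$ must be handled as a genuine Fredholm determinant. I would then prove uniform-in-$b$, locally-uniform-in-$(x,t)$ trace-norm bounds on $\mathbb{M}_{b}(x,t)$ and on $\partial_{x}^{j}\mathbb{M}_{b}$ for $j\le 2$ and on $\partial_{t}\mathbb{M}_{b}$: for the continuous-spectrum (Hankel) part these reduce to Schatten-class criteria for Hankel operators of Peller type, whose natural input is exactly the $5/2$-order smoothness of the symbol coming from~(\ref{cond 2}); for the bound-state part $\sum_{j}c_{b,j}^{2}e^{8\kappa_{b,j}^{3}t}\,e^{-\kappa_{b,j}(\cdot)}\otimes e^{-\kappa_{b,j}(\cdot)}$ one uses $\kappa_{b,j}\le K$ together with sum-rule/Lieb--Thirring estimates on $\sum_{j}\kappa_{b,j}$ against weighted norms of $q_{b}$. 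Next one shows the scattering data itself converges: $R_{b}\to R$ locally uniformly, where $R$ is the reflection coefficient naturally attached to $q$ (this makes sense although $q$ need only be bounded below on the left), and the bound-state sums converge in trace norm. Combining, $\mathbb{M}_{b}(x,t)\to\mathbb{M}(x,t)$ in trace norm, uniformly on compacts of $\mathbb{R}\times\mathbb{R}_{+}$, together with the above derivatives; since $I+\mathbb{M}_{b}$ is uniformly positive and $\det(I+\mathbb{M}_{b})$ stays bounded away from $0$ on compacts, $\log\det$ and its $x$- and $t$-derivatives pass to the limit, and $u:=-2\partial_{x}^{2}\log\det(I+\mathbb{M})$ is the claimed locally uniform limit, of class $C^{3}$ in $x$ and $C^{1}$ in $t$.

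\textbf{Equation, initial data, uniqueness.} Each $u_{b}$ solves~(\ref{KdV}) and, by the preceding step, $u_{b}\to u$ locally uniformly together with $\partial_{x}^{j}u_{b}$ for $j\le 3$ and with $\partial_{t}u_{b}$; hence $u$ satisfies the KdV equation pointwise on $\mathbb{R}\times\mathbb{R}_{+}$. For the initial condition, $u_{b}(\cdot,0)=q_{b}\to q$ (say in $H^{-1}$ on every half-line $(a,\infty)$), while $u_{b}(\cdot,0)\to u(\cdot,0)$ in the same topology, so $u(\cdot,0)=q$ in that sense, as in Cohen--Kappeler. Uniqueness of the solution singled out by~(\ref{convergence}) is automatic once the limit exists; to obtain genuine uniqueness among classical solutions with the natural spatial behaviour I would argue that such a solution possesses scattering data which is conserved up to the explicit KdV phase, hence equal to the data constructed above, so the solution must coincide with the determinant formula — here one uses that~(\ref{cond 2}) and essential semiboundedness persist for $t>0$, making the reconstruction reversible.

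\textbf{Main obstacle.} I expect the genuinely hard part to be the uniform trace-class control of $\mathbb{M}_{b}$ and its first two $x$-derivatives as $b\to-\infty$ in the presence of infinitely many accumulating bound states: one must show these operators form a Cauchy sequence in trace norm with constants that are uniform on compacts of $(x,t)$ but are allowed to degenerate as $x\to-\infty$ or $t\to\infty$, and it is exactly this balance — controlling the soliton part via~(\ref{cond 1}) and the Hankel part via~(\ref{cond 2}) — for which the exponent $5/2$ is sharp within the method. A secondary delicate point is defining the limiting reflection coefficient and bound-state data for a potential with no decay at $-\infty$.
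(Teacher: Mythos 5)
Your overall strategy --- truncate to $q_{b}$, invoke classical inverse scattering and the determinant formula for $u_{b}$, then pass to the limit $b\to-\infty$ in trace norm --- is the same skeleton as the paper's proof, but three essential mechanisms are missing, and without them the plan as written does not close. First, you treat the full Marchenko/Hankel operator $\mathbb{M}_{b}$ built from the symbol $R_{b}(k)e^{2ikx+8ik^{3}t}$ plus the bound-state piece and propose to bound everything uniformly in $b$ directly. This fails at two points. The convergence $R_{b}\to R$ holds only uniformly on compacts of $\mathbb{C}^{+}$ (the limit is $(i\lambda-m_{-}(\lambda^{2}))/(i\lambda+m_{-}(\lambda^{2}))$ with $m_{-}$ the Titchmarsh--Weyl function of $q_{-}$), and since $q$ has no decay at $-\infty$ the limit need not have usable boundary values on $\mathbb{R}$, where the Hankel symbol lives; so trace-norm convergence cannot be read off from symbol convergence on the real line. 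And your Lieb--Thirring route for the bound-state part cannot give uniform bounds: the relevant estimate is $\sum_{j}\kappa_{b,j}\le C\int q_{b}^{-}$, and under (\ref{cond 1}) alone $\int_{b}^{0}q^{-}$ may grow like $|b|$. The paper's resolution is the decomposition $R=G+R_{+}$ of Proposition \ref{on R} and the resulting separation of infinities (Theorem \ref{sep of inft}): the $b$-dependent part $G_{b}$ is meromorphic in $\mathbb{C}^{+}$ with poles confined to $i(0,K]$ by (\ref{cond 1}), so after projecting onto the co-analytic part the contour is shifted to $\mathbb{R}+ih$ with $h>K$; this single contour integral absorbs \emph{all} the bound states (as residues) and converges, with all $(x,t)$-derivatives in trace norm, precisely because $G_{b}\to G$ on that horizontal line. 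The remaining piece $\mathbb{H}(\xi_{x,t}R_{+})$ is independent of $b$ altogether, so no uniformity in $b$ is needed for it.

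Second, your derivative bookkeeping is insufficient: to conclude that $u=-2\partial_{x}^{2}\log\det(1+\mathbb{M})$ is a classical solution you need $\partial_{x}^{3}u$ and $\partial_{t}u$, i.e.\ \emph{five} continuous $x$-derivatives of the operator in trace norm (and $\partial_{t}$ costs three $x$-derivatives of the symbol), not the $\partial_{x}^{j}\mathbb{M}_{b}$, $j\le2$, you propose to control. Third, and this is the actual source of the exponent $5/2$, a bare Peller-type estimate on the symbol $\xi_{x,t}R_{+}$ does not reach $N=5/2$ (the authors note it does not even beat $11/4$). The missing ingredient is the representation (\ref{rep for R+}),
\begin{equation*}
R_{+}(k)=T_{+}(k)\left\{\frac{1}{2ik}\int_{0}^{\infty}e^{-2ikx}q(x)\,dx+\frac{1}{(2ik)^{2}}\int_{0}^{\infty}e^{-2ikx}Q^{\prime}(x)\,dx\right\},
\end{equation*}
in which the factors $(2ik)^{-1}$ and $(2ik)^{-2}$ are traded for one and two free $x$-derivatives of the Hankel operator, so that Theorem \ref{Trace class theorem} applied to $g=q\in L^{1}_{5/2}(\mathbb{R}_{+})$ and $g=Q^{\prime}\in L^{1}_{3/2}(\mathbb{R}_{+})$ yields exactly the five trace-norm derivatives required. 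Without some substitute for this representation your appeal to ``$5/2$-order smoothness being exactly what Peller needs'' is an assertion, not an argument. Your treatment of uniqueness within the class singled out by (\ref{convergence}) is fine, and your use of (\ref{cond 1}) to confine the $\kappa_{b,j}$ to a fixed interval is correct and is indeed how the paper uses that hypothesis.
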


We now discuss how Theorem \ref{MainThm} is related to previously known
results and outline the ideas behind our arguments.

Compare first conditions (\ref{Cohen Kappeler cond 1}) and (\ref{cond 1}).
Note that (\ref{Cohen Kappeler cond 1}) is the natural condition for
solubility of the classical inverse scattering problem (the Marchenko
characterization of scattering data \cite{MarchBook2011}), which is the
backbone of the IST. Since the Cohen-Kappeler approach is based upon the
Marchenko integral equation, the condition (\ref{Cohen Kappeler cond 1})
cannot be relaxed within their framework. It is well-known however that the
KdV equation is strongly unidirectional (solitons run to the right) which has
to be reflected somehow in the conditions on initial data. As opposed to
Cohen-Kappeler our approach is based on "one-sided" scattering (from the
right) for the full line Schr\"{o}dinger operator $\mathbb{L}_{q}%
=-\partial_{x}^{2}+q\left(  x\right)  $, which requires the decay\footnote{In
fact only $L^{1}$ decay is needed for the direct scattering problem.}
(\ref{Cohen Kappeler cond 1}) only at $+\infty$. The direct scattering problem
can be solved then as long as $q$ is in the so-called limit point case at
$-\infty$, which is readily provided by our (\ref{cond 1}). But of course the
IST requires by definition a suitable inverse scattering. We however do not
analyze the inverse scattering problem which could in fact be a difficult
endeavour. Instead, we bypass it by considering first truncated data
$q_{b}=\left.  q\right\vert _{\left(  b,\infty\right)  }$ covered by the
classical Faddeev-Marchenko inverse scattering theory. Since $q_{b}\in
H^{-1}\left(  \mathbb{R}\right)  $ for any $b$, the problem (\ref{KdV}) is
well-posed in $H^{-1}\left(  \mathbb{R}\right)  $ (in fact in $H^{s}\left(
\mathbb{R}\right)  $ for any $s<0$). We then study its solution $u_{b}(x,t)$
as $b\rightarrow-\infty$ and it is how our notion of well-posedness comes
about in Theorem \ref{MainThm}. Justifications of our limiting procedures rely
on some subtle facts from the theory of Hankel operators. As the reader will
see in Sections \ref{HO}-\ref{main} the Hankel operator plays an indispensable
role in proving our results. We only mention here that our Hankel operator is
nothing but a different representation of the classical Marchenko operator.
But of course it makes all the difference. Observe that condition
(\ref{cond 1}) doesn't assume any pattern of behavior at $-\infty$ and is, in
a certain sense, optimal (see Section \ref{conclusions}). We noticed this
phenomenon first in \cite{RybNON2010} under additional technical assumptions.
We eventually weeded them all out in \cite{GruRybSIMA15} when the full power
of the theory of Hankel operators was unleashed. In this sense the condition
(\ref{cond 1}) is not new but we present here a better proof.

Our condition (\ref{cond 2}) is new. It apparently improves $N$ in
(\ref{Cohen Kappeler cond 2}) by $1/4$. We can actually show that $N=11/4$
cannot be improved within the Cohen-Kappeler approach. We save extra $1/4$ by
representing the symbol of our Hankel operator (the Marchenko operator in
disguise) in a suitable form. This representation is very natural and common
in the theory of Hankel operators but is obscured in the Marchenko form. It
then invites the famous characterization of trace class Hankel operators due
to Peller \cite{Peller2003}. We first noticed the relevance of Peller's
theorem in \cite{RybNON2011} but were able to overcome numerous technical
difficulties only recently in \cite{RybPAMS18}, \cite{GruRybMatNotes18}. We
could not however achieve the condition (\ref{cond 2}) and in fact could not
even beat $N=11/4$. This is done in the current paper by finding a new
representation of the reflection coefficient, Proposition \ref{on R}. Thus
Proposition \ref{on R} combined with Theorem \ref{Trace class theorem} taken
from our \cite{GruRybMatNotes18} leads to the condition (\ref{cond 2}).

What we find remarkable is that Theorem \ref{MainThm} comes with an explicit
determinant formula for our solution (an extension of the Dyson formula). We
postpone its discussion till Section \ \ref{main} when we have all necessary terminology.

Theorem \ref{MainThm} immediately implies

\begin{theorem}
Suppose that $q$ in (\ref{KdV}) is real,
\begin{equation}%
{\displaystyle\sum\limits_{n=-\infty}^{\infty}}
\left(  \int_{n}^{n+1}\left\vert q\left(  x\right)  \right\vert dx\right)
^{2}<\infty, \label{cond 1'}%
\end{equation}
and%
\[
\int^{\infty}\left(  1+\left\vert x\right\vert \right)  ^{N}\left\vert
q\left(  x\right)  \right\vert dx<\infty,\ \ \ N\geq5/2,
\]
then the problem (\ref{KdV}) has a unique classical solution $u\left(
x,t\right)  $ such that%
\begin{equation}
\lim_{t\rightarrow+0}u(x,t)=q\left(  x\right)  \text{ in }H^{-1}\left(
\mathbb{R}\right)  . \label{IC in H^-1}%
\end{equation}

\end{theorem}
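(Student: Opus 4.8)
The plan is to read this corollary off Theorem~\ref{MainThm}: the hypothesis (\ref{cond 1'}) is strictly stronger than (\ref{cond 1}), so Theorem~\ref{MainThm} already manufactures a unique classical solution $u=\lim_{b\to-\infty}u_{b}$ with convergence uniform on compacts, and the only genuinely new point is to upgrade that convergence to the $H^{-1}(\mathbb{R})$ statement (\ref{IC in H^-1}).

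First I would check that (\ref{cond 1'}) $\Rightarrow$ (\ref{cond 1}). Since the terms of a convergent series are bounded, $M:=\sup_{n\in\mathbb{Z}}\int_{n}^{n+1}\left\vert q\right\vert dx<\infty$, and every interval $I$ with $\left\vert I\right\vert =1$ is covered by two consecutive cells $[n,n+1]\cup[n+1,n+2]$, whence $\int_{I}\max(-q,0)\,dx\le\int_{I}\left\vert q\right\vert dx\le 2M$; together with the decay hypothesis on $q$ this is exactly the setting of Theorem~\ref{MainThm}. Next I would record that (\ref{cond 1'}) also places $q$ in $H^{-1}(\mathbb{R})$ with $\left\Vert q-q_{b}\right\Vert _{H^{-1}}\to0$ as $b\to-\infty$. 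Writing $q=\sum_{n}q_{n}$ with $q_{n}=q\,\mathbf{1}_{[n,n+1]}$ and $a_{n}=\left\Vert q_{n}\right\Vert _{L^{1}}$, one has $\widehat{q_{n}}(\xi)=e^{-in\xi}g_{n}(\xi)$ where all $\xi$-derivatives of $g_{n}$ are bounded by $a_{n}$ (the exponent lives on $[0,1]$) and $\langle\xi\rangle^{-1}\in L^{2}$; integrating by parts twice in $\xi$ then gives the almost-orthogonality bound $\left\vert \langle q_{n},q_{m}\rangle_{H^{-1}}\right\vert \lesssim a_{n}a_{m}\langle n-m\rangle^{-2}$, and Schur's test yields $\left\Vert \sum_{n\in E}q_{n}\right\Vert _{H^{-1}}^{2}\lesssim\sum_{n\in E}a_{n}^{2}$ for every $E\subseteq\mathbb{Z}$. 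Taking $E=\mathbb{Z}$ shows $q\in H^{-1}(\mathbb{R})$, and $E=\{n\le b\}$ shows $q_{b}\to q$ in $H^{-1}(\mathbb{R})$; alternatively one may just quote the embedding $\ell^{2}(L^{1})\hookrightarrow H^{-1}(\mathbb{R})$.

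Finally I would match the two limiting procedures. Each truncated datum $q_{b}$ is a short-range potential (it vanishes for $x<b$ and decays at $+\infty$), hence lies in $H^{-1}(\mathbb{R})$, so the Killip--Visan well-posedness of (\ref{KdV}) in $H^{-1}(\mathbb{R})$ from \cite{Killip2018} applies: by its uniqueness part the classical solution $u_{b}$ supplied by Theorem~\ref{MainThm} coincides with the $H^{-1}$-well-posed solution $\Phi(q_{b})$, and by its continuity part $q_{b}\to q$ in $H^{-1}$ forces $u_{b}=\Phi(q_{b})\to\Phi(q)$ in $C([0,T],H^{-1}(\mathbb{R}))$ for every $T>0$. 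But Theorem~\ref{MainThm} also gives $u_{b}\to u$ uniformly on compacts of $\mathbb{R}\times\mathbb{R}_{+}$, hence in $\mathcal{D}'(\mathbb{R}\times\mathbb{R}_{+})$; since distributional limits are unique, $u=\Phi(q)$. In particular $u\in C([0,\infty),H^{-1}(\mathbb{R}))$ with $u(\cdot,0)=q$, which is (\ref{IC in H^-1}); uniqueness of the classical solution satisfying (\ref{IC in H^-1}) is then inherited both from Theorem~\ref{MainThm} and from the $H^{-1}(\mathbb{R})$-uniqueness in \cite{Killip2018}.

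The only real issue — as opposed to the routine amalgam-space bookkeeping of the middle step — is this last identification: one must be certain that the object obtained from Theorem~\ref{MainThm} as a compact-convergence limit of the $u_{b}$ is literally the $H^{-1}$-well-posed solution, so that it attains the initial data in the $H^{-1}$ norm and not merely in $\mathcal{D}'$. This is settled by combining $H^{-1}$-uniqueness for the truncated problems with the uniqueness of distributional limits, and requires no analysis beyond what is already contained in Theorem~\ref{MainThm} and \cite{Killip2018}.
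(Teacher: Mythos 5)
Your proposal is correct and follows essentially the same route as the paper, which disposes of this theorem in two sentences: (\ref{cond 1'}) implies (\ref{cond 1}) so Theorem \ref{MainThm} applies, and (\ref{cond 1'}) implies $q\in H^{-1}(\mathbb{R})$ so the Killip--Visan well-posedness yields (\ref{IC in H^-1}). You merely supply details the paper leaves implicit --- the almost-orthogonality proof of the amalgam embedding $\ell^{2}(L^{1})\hookrightarrow H^{-1}(\mathbb{R})$ and the identification of the compact-convergence limit $u$ with the $H^{-1}$-well-posed solution --- both of which are accurate.
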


Indeed, since the condition (\ref{cond 1'}) clearly implies (\ref{cond 1}) and
hence Theorem \ref{MainThm} applies, we have a classical solution $u\left(
x,t\right)  $. On the other hand, (\ref{cond 1'}) also means that $q\in
H^{-1}\left(  \mathbb{R}\right)  $ and hence, due to the well-posedness in
$H^{-1}$ (see \cite{Killip2018}), (\ref{IC in H^-1}) holds. The convergence
(\ref{convergence}) is then superfluous as it merely follows from the well-posedness.

In fact, (\ref{cond 1'}) can be replaced with $q\in H^{-1}\left(
\mathbb{R}\right)  $. The arguments follow our \cite{GruRemRyb2015} where we
treat $H_{\operatorname*{loc}}^{-1}\left(  \mathbb{R}\right)  $ initial data
supported on a left half line. We leave the full proof out.

Note that Theorem \ref{MainThm} does not require specifying in what sense the
initial condition is understood. In fact, we do not rule out the existence of
a different solution to (\ref{KdV}) but such a solution will not be physical
as the natural requirement (\ref{convergence}) is clearly lost. In
\cite{RybNON2011}, under some additional condition we show that
(\ref{IC in H^-1}) holds in $L^{2}\left(  a,\infty\right)  $ for any
$a>-\infty$. We believe our Hankel operator approach offers some optimal
statements about initial condition. We plan to address it elsewhere.

Note that our theorems demonstrate a strong smoothing effect of the KdV flow
(see section \ref{conclusions}).

The paper is organized as follows. The short Section \ref{notation} is devoted
to our agreement on notation. In Section \ref{Refl} we present some background
on scattering theory and establish some properties of the reflection
coefficient crucially important for what follows. In Section \ref{HO} we give
brief background information on Hankel operators and prepare some statements
for the following sections. In Section \ref{sect on sep of inft} we introduce
what we maned separation of infinities principle which makes the proof of
Theorem \ref{MainThm} much more structured and easier to follow. Section
\ref{main} is devoted to the proof of Theorem \ref{MainThm} and the final
section \ref{conclusions} is reserved for relevant discussions.

\section{Notations\label{notation}}

We follow standard notation accepted in Analysis. For number sets:
$\mathbb{N}_{0}=\left\{  0,1,2,...\right\}  $, $\mathbb{R}$ is the real line,
$\mathbb{R}_{\pm}=(0,\pm\infty)$, $\mathbb{C}$ is the complex plane,
$\mathbb{C}^{\pm}=\left\{  z\in\mathbb{C}:\pm\operatorname{Im}z>0\right\}  $.
$\overline{z}$ is the complex conjugate of $z.$

Besides number sets, black board bold letters will also be used for (linear)
operators. As always, $\partial_{x}^{n}:=\partial^{n}/\partial x^{n}.$

As usual, $L^{p}\left(  S\right)  ,\ 0<p\leq\infty$, is the Lebesgue space on
a set $S$. If $S=\mathbb{R}$ then we abbreviate $\ L^{p}\left(  \mathbb{R}%
\right)  =L^{p}$. We will also deal with the weighted $L^{1}$ spaces
\[
L_{N}^{1}\left(  S\right)  =\left\{  f\ |\ \int_{S}\left(  1+\left\vert
x\right\vert ^{N}\right)  \left\vert f\left(  x\right)  \right\vert
dx<\infty\right\}  ,\ \ N>0.
\]
This function class is basic for scattering theory for 1D Schr\"{o}dinger operators.

\section{The structure of the reflection coefficient\label{Refl}}

Through this section we assume that $q$ is short-range, i.e. $q\in L_{1}^{1}$.
Associate with $q$ the full line Schr\"{o}dinger operator $\mathbb{L}%
_{q}=-\partial_{x}^{2}+q(x)$. As is well-known, $\mathbb{L}_{q}$ is
self-adjoint on $L^{2}$ and its spectrum consists of a finite number of simple
negative eigenvalues $\{-\kappa_{n}^{2}\}$, called bound states, and two fold
absolutely continuous component filling $\mathbb{R}_{+}$. There is no singular
continuous spectrum. Two linearly independent (generalized) eigenfunctions of
the a.c. spectrum $\psi_{\pm}(x,k),\;k\in\mathbb{R}$, can be chosen to
satisfy
\begin{equation}
\psi_{\pm}(x,k)=e^{\pm ikx}+o(1),\;\partial_{x}\psi_{\pm}(x,k)\mp ik\psi_{\pm
}(x,k)=o(1),\ \ x\rightarrow\pm\infty. \label{eq6.2}%
\end{equation}
The functions $\psi_{\pm}$ are referred to as Jost solutions of the
Schr\"{o}dinger equation
\begin{equation}
\mathbb{L}_{q}\psi=k^{2}\psi. \label{eq6.3}%
\end{equation}
Since $q$ is real, $\overline{\psi_{\pm}}$ also solves (\ref{eq6.3}) and one
can easily see that the pairs $\{\psi_{+},\overline{\psi_{+}}\}$ and
$\{\psi_{-},\overline{\psi_{-}}\}$ form fundamental sets for (\ref{eq6.3}).
Hence $\psi_{\mp}$ is a linear combination of $\{\psi_{\pm},\overline
{\psi_{\pm}}\}$. We write this fact as follows ($k\in\mathbb{R}$)
\begin{align}
T(k)\psi_{-}(x,k)  &  =\overline{\psi_{+}(x,k)}+R(k)\psi_{+}%
(x,k),\label{R basic scatt identity}\\
T(k)\psi_{+}(x,k)  &  =\overline{\psi_{-}(x,k)}+L(k)\psi_{-}(x,k),
\label{L basic scatt identity}%
\end{align}
where $T,R,$ and $L$ are called transmission, right, and left reflection
coefficients respectively. The identities (\ref{R basic scatt identity}%
)-(\ref{L basic scatt identity}) are totally elementary but serve as a basis
for inverse scattering theory and for this reason they are commonly referred
to as basic scattering relations. As is well-known (see, e.g.
\cite{MarchBook2011}), the triple $\{R,(\kappa_{n},c_{n})\}$, where
$c_{n}=\left\Vert \psi_{+}(\cdot,i\kappa_{n})\right\Vert ^{-1}$, determines
$q$ uniquely and is called the scattering data for $\mathbb{L}_{q}$. We will need

\begin{proposition}
[Structure of the classical reflection coefficient]\label{on R}Suppose $q$ is
real and in $L_{1}^{1}$ and $q_{\pm}=\left.  q\right\vert _{\mathbb{R}_{\pm}}$
is the restriction of $q$ to $\mathbb{R}_{\pm}$. Let $\{R,(\kappa_{n}%
,c_{n})\},$ $\{R_{+},(\kappa_{n}^{+},c_{n}^{+})\}$ be the scattering data for
$\mathbb{L}_{q},\mathbb{L}_{q_{+}}$ respectively. Then%
\begin{equation}
R=G+R_{+}. \label{R-split}%
\end{equation}
The function $G$ admits the representation%
\begin{equation}
G=\frac{T_{+}^{2}R_{-}}{1-L_{+}R_{-}}, \label{GG}%
\end{equation}
where $T_{+},L_{+}$ are the transmission and the left reflection coefficients
from $q_{+}$ and $R_{-}$ is the right reflection coefficient from $q_{-}$. The
function $G$ is bounded on $\mathbb{R}$ and meromorphic on $\mathbb{C}^{+}$
with simple poles at $\left(  i\kappa_{n}\right)  $ and $\left(  i\kappa
_{n}^{+}\right)  $ with residues%
\begin{equation}
\operatorname*{Res}_{k=i\kappa_{n}}G(k)=ic_{n},\ \ \ \operatorname*{Res}%
_{k=i\kappa_{n}^{+}}G(k)=ic_{n}^{+}, \label{residues}%
\end{equation}
Furthermore,%
\begin{equation}
R_{+}\left(  k\right)  =T_{+}\left(  k\right)  \left\{  \frac{1}{2ik}\int%
_{0}^{\infty}e^{-2ikx}q\left(  x\right)  dx+\frac{1}{\left(  2ik\right)  ^{2}%
}\int_{0}^{\infty}e^{-2ikx}Q^{\prime}\left(  x\right)  dx\right\}  ,
\label{rep for R+}%
\end{equation}
where $Q$ is an absolutely continuous function subject to%
\begin{equation}
\left\vert Q^{\prime}\left(  x\right)  \right\vert \leq C_{1}\left\vert
q\left(  x\right)  \right\vert +C_{2}\int_{x}^{\infty}\left\vert q\right\vert
,\ \ x\geq0, \label{props of Q}%
\end{equation}
with some (finite) constants $C_{1},C_{2}$ dependent on $\left\Vert
q_{+}\right\Vert _{L^{1}}$ and $\left\Vert q_{+}\right\Vert _{L_{1}^{1}}$ only.
\end{proposition}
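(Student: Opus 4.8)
\smallskip

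\noindent\textbf{Sketch of the argument.} The identities \eqref{R-split}--\eqref{GG} are the composition (``star'') law for scattering data under the concatenation of the medium $q_-$ on $\mathbb{R}_-$ with the medium $q_+$ on $\mathbb{R}_+$, and I would derive them directly from the basic scattering relations. Write $\psi^{(\pm)}_\pm$ for the Jost solutions of $\mathbb{L}_{q_\pm}$, let $T_\pm,R_\pm,L_\pm$ be the associated scattering coefficients, and write $i\kappa_n^-$ for the bound states of $\mathbb{L}_{q_-}$. Since $q=q_+$ on $\mathbb{R}_+$ and $q=q_-$ on $\mathbb{R}_-$, uniqueness of Jost solutions gives $\psi_+(x,k)=\psi^{(+)}_+(x,k)$ for $x\ge 0$ and $\psi_-(x,k)=\psi^{(-)}_-(x,k)$ for $x\le 0$. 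On $\mathbb{R}_+$ one may write $\psi_-(x,k)=\alpha(k)\psi^{(+)}_+(x,k)+\beta(k)\overline{\psi^{(+)}_+(x,k)}$; comparing with \eqref{R basic scatt identity} for $\mathbb{L}_q$ on $x\ge 0$ (where $\psi_+=\psi^{(+)}_+$) and using linear independence yields $T=1/\beta$, $R=\alpha/\beta$. The pair $(\alpha,\beta)$ is determined by matching $\psi_-$ and $\partial_x\psi_-$ at $x=0$: the Cauchy data of $\psi^{(+)}_+$ at $0$ are read off from $T_+,L_+$ via \eqref{L basic scatt identity} for $\mathbb{L}_{q_+}$ on $\mathbb{R}_-$ (where $\psi^{(+)}_-=e^{-ikx}$), and those of $\psi_-=\psi^{(-)}_-$ at $0$ from $T_-,R_-$ via \eqref{R basic scatt identity} for $\mathbb{L}_{q_-}$ on $\mathbb{R}_+$ (where $\psi^{(-)}_+=e^{ikx}$). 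Cramer's rule (the determinant being the constant Wronskian $W\bigl(\psi^{(+)}_+,\overline{\psi^{(+)}_+}\bigr)=-2ik$) gives $\beta=(1-L_+R_-)/(T_+T_-)$ and $\alpha=(R_--\overline{L_+})/(T_-\overline{T_+})$. Finally I would insert the two elementary relations obtained by evaluating \eqref{R basic scatt identity} for $\mathbb{L}_{q_+}$ on $\mathbb{R}_-$, namely $T_+/\overline{T_+}=T_+^2-R_+L_+$ and $R_+\overline{T_+}=-\overline{L_+}T_+$; a short algebraic reduction turns $R=\alpha/\beta$ into $R=R_++T_+^2R_-/(1-L_+R_-)$, which is \eqref{R-split}--\eqref{GG}, and the same computation gives $T=1/\beta=T_+T_-/(1-L_+R_-)$.

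Boundedness of $G$ on $\mathbb{R}$ is then immediate from \eqref{R-split}: for real $k$, $|G(k)|\le|R(k)|+|R_+(k)|\le 2$, the reflection coefficients being bounded by $1$ and continuous on all of $\mathbb{R}$ (including $k=0$) since $q,q_+\in L^1_1$. For the analytic structure I would work entirely on the right-hand side of \eqref{GG}. Because $q_\pm\in L^1_1$, for each fixed $x$ the Jost solutions $\psi^{(+)}_+(\cdot,k)$ and $\psi^{(-)}_-(\cdot,k)$ are analytic in $k\in\mathbb{C}^+$ and continuous up to $\mathbb{R}$; evaluating the basic relations at $x=0$ expresses $1/T_+,L_+$ in terms of $\psi^{(+)}_+(0,k),\partial_x\psi^{(+)}_+(0,k)$ and $1/T_-,R_-$ in terms of $\psi^{(-)}_-(0,k),\partial_x\psi^{(-)}_-(0,k)$, so $T_+,L_+$ are meromorphic in $\mathbb{C}^+$ with simple poles only at the $i\kappa^+_n$, $R_-$ is meromorphic with simple poles only at the $i\kappa^-_n$, and hence $G$ is meromorphic in $\mathbb{C}^+$. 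To locate the poles of $G$ I would use the companion identity $1-L_+R_-=T_+T_-/T$ from the composition law above: since $T,T_\pm$ are zero-free in $\mathbb{C}^+$ with simple poles exactly at $i\kappa_n,i\kappa^\pm_n$, the factor $(1-L_+R_-)^{-1}$ has simple poles exactly at the $i\kappa_n$ and simple zeros at the $i\kappa^\pm_n$. Inserting this into $G=T_+^2R_-(1-L_+R_-)^{-1}$: the double pole of $T_+^2$ at $i\kappa^+_n$ is cut down to a simple pole, the simple pole of $R_-$ at $i\kappa^-_n$ is cancelled, and fresh simple poles appear at the $i\kappa_n$ --- exactly the pole set claimed. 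The residues \eqref{residues} then follow from a one-line local expansion at each pole combined with the classical expressions for $\operatorname*{Res}_{i\kappa_n}T$ and $\operatorname*{Res}_{i\kappa^+_n}T_+$ in terms of the norming constants $c_n,c^+_n$ (see \cite{MarchBook2011}).

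For \eqref{rep for R+}--\eqref{props of Q} I would begin from the integral representation of the right reflection coefficient of $q_+$. The Volterra equation for $\psi^{(+)}_-$ (which is $e^{-ikx}$ on $\mathbb{R}_-$), compared with its $x\to+\infty$ asymptotics via \eqref{R basic scatt identity} for $\mathbb{L}_{q_+}$, gives
\[
\frac{R_+(k)}{T_+(k)}=\frac1{2ik}\int_0^\infty e^{-iky}q(y)\,\psi^{(+)}_-(y,k)\,dy .
\]
Setting $\psi^{(+)}_-(y,k)=e^{-iky}\mu(y,k)$, so that $\mu(y,k)=1+\frac1{2ik}\int_0^y\bigl(e^{2ik(y-z)}-1\bigr)q(z)\mu(z,k)\,dz$, peels off the leading ``first Born'' term $\frac1{2ik}\int_0^\infty e^{-2iky}q(y)\,dy$ of \eqref{rep for R+}, leaving $\frac1{2ik}\int_0^\infty e^{-2iky}q(y)\bigl(\mu(y,k)-1\bigr)\,dy$. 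Substituting the Volterra formula for $\mu-1$, reversing the order of integration, and using $e^{-iky}\psi^{(+)}_-=e^{-2iky}\mu$ and $e^{iky}\psi^{(+)}_-=\mu$, I would rewrite this remainder as $\frac1{(2ik)^2}\int_0^\infty e^{-2iky}Q'(y)\,dy$ for an absolutely continuous $Q$ with $Q(+\infty)=0$ assembled from $\mu$ and the tail $\sigma(y)=\int_y^\infty q$; the bound \eqref{props of Q} then comes from the classical Faddeev estimates on the transformation operator of $q_+$, which control $\mu$ by $\|q_+\|_{L^1}$ and $\|q_+\|_{L^1_1}$.

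The genuinely delicate step --- and the one that buys the extra $1/4$ in \eqref{cond 2} --- is this last one: keeping the correction to $R_+/T_+$ in the clean $\frac1{(2ik)^2}\,\widehat{(\cdot)}$ shape while simultaneously controlling its Fourier amplitude pointwise by $|q|$ and its tail, so that Peller's trace-class criterion can subsequently be applied. By comparison, the composition-law algebra and the pole/residue bookkeeping are routine.
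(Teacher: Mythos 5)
Your derivation of (\ref{R-split})--(\ref{GG}) via the concatenation (composition) law, and your bookkeeping of the poles through the identity $1-L_{+}R_{-}=T_{+}T_{-}/T$, are correct and essentially equivalent to what the paper does: the paper obtains the split by subtracting the two expressions for $R$ and $R_{+}$ in terms of $\psi_{\pm}(0,k)$, cites \cite{RybNON2011} for (\ref{GG}), and disposes of the spurious singularities at the zeros of $\psi_{+}(0,k)$ by a Wronskian argument rather than by your factorization of $1-L_{+}R_{-}$; both routes deliver the same pole set and the residues (\ref{residues}). This part of your proposal is fine.

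The gap is in the part that actually matters for the improvement to $N=5/2$, namely (\ref{rep for R+})--(\ref{props of Q}). Your starting identity $R_{+}/T_{+}=\frac{1}{2ik}\int_{0}^{\infty}e^{-iky}q(y)\psi_{-}^{(+)}(y,k)\,dy$ is correct, and peeling off the first Born term is right, but the remainder $\frac{1}{2ik}\int_{0}^{\infty}e^{-2iky}q(y)\left(\mu(y,k)-1\right)dy$ still contains the $k$-dependent factor $\mu(y,k)$; converting it into $\frac{1}{(2ik)^{2}}\int_{0}^{\infty}e^{-2iky}Q^{\prime}(y)\,dy$ with a $k$-independent $Q^{\prime}$ obeying the pointwise bound (\ref{props of Q}) is exactly the content of the proposition, and you leave it at ``I would rewrite this remainder as\dots''. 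Worse, the route you choose --- through the left Jost solution $\psi_{-}^{(+)}$ of the right-supported potential $q_{+}$ --- is the unfavorable side for the Faddeev/Deift--Trubowitz estimates: the exponent in the kernel bounds for the \emph{left} transformation operator is $\int_{-\infty}^{x}(x-t)|q_{+}(t)|\,dt$, which grows linearly in $x$ on the support of $q_{+}$, so the ``classical estimates'' you invoke do not directly yield constants depending only on $\left\Vert q_{+}\right\Vert _{L^{1}}$ and $\left\Vert q_{+}\right\Vert _{L_{1}^{1}}$. The paper avoids this by using the Deift--Trubowitz representation (\ref{R/T}) with $g(y)=-\partial_{x}B_{+}(0,y)$ (the $B_{-}$ contribution vanishing because $q_{+}$ is supported on $\mathbb{R}_{+}$), reading off $g=q+Q$ with $Q(y)=\int_{0}^{y}q(z)B_{+}(z,y-z)\,dz$ from the integral equation for $B_{+}$, and then deducing (\ref{props of Q}) from the estimates (\ref{eq1.5})--(\ref{eq1.6}), whose exponent $\gamma(x)=\int_{x}^{\infty}(t-x)|q|$ is bounded by $\gamma(0)$ for $x\geq0$. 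To close your argument you would need either to switch to the right kernel $B_{+}$ as the paper does, or to prove the corresponding (nonstandard) estimates for the left kernel of a right-supported potential; as written, the key bound (\ref{props of Q}) is not established.
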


\begin{proof}
From (\ref{R basic scatt identity}) we have%
\[%
\begin{array}
[c]{ccc}%
R(k) & = & T(k)\frac{\psi_{-}\left(  0,k\right)  }{\psi_{+}\left(  0,k\right)
}-\frac{\overline{\psi_{+}\left(  0,k\right)  }}{\psi_{+}\left(  0,k\right)
}\\
R_{+}(k) & = & \frac{T_{+}(k)}{\psi_{+}\left(  0,k\right)  }-\frac
{\overline{\psi_{+}\left(  0,k\right)  }}{\psi_{+}\left(  0,k\right)  }%
\end{array}
.
\]
Subtracting these equations yields%
\[
R(k)=R_{+}(k)+G\left(  k\right)  ,
\]
where%
\begin{equation}
G\left(  k\right)  :=T(k)\frac{\psi_{-}\left(  0,k\right)  }{\psi_{+}\left(
0,k\right)  }-\frac{T_{+}(k)}{\psi_{+}\left(  0,k\right)  } \label{G}%
\end{equation}
We refer to our \cite{RybNON2011} for the details of derivation of (\ref{GG}).
The function $G$, initially defined and bounded on the real line, can be
analytically continued into $\mathbb{C}^{+}$ (since $T$ is meromorphic in
$\mathbb{C}^{+}$ and $\psi_{\pm}$ are analytic there). Its singularities
(including removable) come apparently from the poles of $T,T_{+}$ and the
zeros of $\psi_{+}\left(  0,k\right)  $. It is well-known from the classical
1D scattering theory (see, e.g. \cite{Deift79}) that the poles of $T,T_{+}$
occur at $\left(  i\kappa_{n}\right)  $, $\left(  i\kappa_{n}^{+}\right)  $,
where $\left(  -\kappa_{n}^{2}\right)  $, $\left(  -\left(  \kappa_{n}%
^{+}\right)  ^{2}\right)  $ are the (negative) bound states of $\mathbb{L}%
_{q}$ and $\mathbb{L}_{q_{+}}$ respectively and moreover,%
\[
\operatorname*{Res}_{k=i\kappa_{n}}T(k)\frac{\psi_{-}(0,k)}{\psi_{+}%
(0,k)}=ic_{n},\ \ \ \operatorname*{Res}_{k=i\kappa_{n}^{+}}\frac{T_{+}%
(k)}{\psi_{+}(0,k)}=ic_{n}^{+}.
\]
This combined with (\ref{G}) implies (\ref{residues}). We now show that zeros
of $\psi_{+}\left(  0,k\right)  $ are removable singularities of $G$. It
follows from (\ref{R basic scatt identity}) that
\begin{equation}
T=\frac{2ik}{W(\psi_{-},\psi_{+})},\ \ T_{+}\left(  k\right)  =\frac
{2ik}{W(\psi_{0,-},\psi_{0,+})},\ \label{eq6.8}%
\end{equation}
where $\psi_{0,\pm}$ are the Jost solutions corresponding to $q_{+}$ and
$W\left(  f,g\right)  =fg^{\prime}-f^{\prime}g$ stands for the Wronskian. For
$G$ we then have%
\[
G\left(  k\right)  =\frac{2ik}{\psi_{+}\left(  0,k\right)  }\left\{
\frac{\psi_{-}\left(  0,k\right)  }{W(\psi_{-},\psi_{+})}-\frac{1}%
{W(\psi_{0,-},\psi_{0,+})}\right\}  .
\]
Since $\psi_{0,-}\left(  x,k\right)  =e^{-ikx},\ x\leq0,$ and $\psi
_{0,+}\left(  x,k\right)  =\psi_{+}\left(  x,k\right)  ,\ x\geq0,$ one
concludes that ($W$ is independent of $x$)%
\begin{align}
W(\psi_{0,-},\psi_{0,+})  &  =\psi_{0,-}\left(  0,k\right)  \partial_{x}%
\psi_{0,+}\left(  0,k\right)  -\partial_{x}\psi_{0,-}\left(  0,k\right)
\psi_{0,+}\left(  0,k\right) \nonumber\\
&  =\partial_{x}\psi_{+}\left(  0,k\right)  +ik\psi_{+}\left(  0,k\right)  ,
\label{W}%
\end{align}
and we arrive at%
\[
G\left(  k\right)  =\frac{2ik}{W(\psi_{-},\psi_{+})}\frac{\partial_{x}\psi
_{-}\left(  0,k\right)  +ik\psi_{-}\left(  0,k\right)  }{\partial_{x}\psi
_{+}\left(  0,k\right)  +ik\psi_{+}\left(  0,k\right)  }.
\]
It now follows from (\ref{W}) that a zero of $\psi_{+}\left(  0,k\right)  $
cannot be a zero of $\partial_{x}\psi_{+}\left(  0,k\right)  $ (otherwise
$\psi_{0,-}$ and $\psi_{0,+}$ were linearly dependant) and thus a zero of
$\psi_{+}\left(  0,k\right)  $ is not a pole of $G\left(  k\right)  $.

Turn now to (\ref{rep for R+}). To this end we use the following
representation from \cite{Deift79}
\begin{equation}
2ik\ \frac{R_{+}(k)}{T_{+}(k)}=%
{\displaystyle\int_{-\infty}^{\infty}}
g(y)e^{-2iky}dy, \label{R/T}%
\end{equation}
where $g$ is defined as follows. Let%
\[
y_{\pm}\left(  x,k\right)  =e^{\mp ikx}\psi_{\pm}(x,k).
\]
As is shown in \cite{Deift79}, $y_{\pm}\left(  x,k\right)  -1\in H^{2}$ for
every $x$,
\[
y_{\pm}(x,k)=1\pm\int_{0}^{\pm\infty}B_{\pm}(x,y)e^{\pm2iky}\,dy,
\]
(i.e. the Fourier representation of $y_{\pm}\left(  x,k\right)  -1$) and
\begin{align*}
g(y)  &  =-\partial_{x}B_{+}(0,y)+\partial_{x}B_{-}(0,y)+\int\partial_{x}%
B_{-}(0,z)B_{+}(0,y-z)dz\\
&  -\int\partial_{x}B_{+}(0,z)B_{-}(0,x-z)dz.
\end{align*}
In our case $y_{-}\left(  x,k\right)  =1$ for $x\leq0$ and hence
$B_{-}(x,y)=0$. Therefore the previous equation simplifies to%
\begin{equation}
g(y)=-\partial_{x}B_{+}(0,y). \label{g}%
\end{equation}
$B_{+}(x,y)$, in turn, solves the integral equation \cite{Deift79}
\[
B_{+}(x,y)-\int_{0}^{y}\left(  \int_{x+y-z}^{\infty}q_{+}(t)B_{+}%
(t,z)dt\right)  dz=\int_{x+y}^{\infty}q_{+}(t)dt,\;y\geq0.
\]
Differentiating this equation in $x$ and setting $x=0$ yields%
\begin{align}
g(y)  &  =q_{+}(y)+\int_{0}^{y}q_{+}(y-z)B_{+}(y-z,z)dz\nonumber\\
&  =q_{+}(y)+\int_{0}^{y}q_{+}(z)B_{+}(z,y-z)dz\nonumber\\
&  =:q(y)+Q\left(  y\right)  . \label{q+Q}%
\end{align}
Let us now study $Q$. It is clearly supported on $\left(  0,\infty\right)  $
and one has
\begin{equation}
Q^{\prime}(y)=q(y)B_{+}(y,0)+\int_{0}^{y}q(z)\partial_{y}B_{+}(z,y-z)dz:=g_{1}%
(y)+g_{2}(y). \label{Q'}%
\end{equation}
To obtain the desired estimate (\ref{props of Q}) we make use of two crucially
important estimates from \cite{Deift79}: for $q\in L_{1}^{1}$%
\begin{equation}
\left\vert B_{+}(x,y)\right\vert \leq\eta(x+y)e^{\gamma(x)}, \label{eq1.5}%
\end{equation}
and
\begin{equation}
\left\vert \partial_{y}B_{+}(x,y)+q(x+y)\right\vert \leq2\eta(x+y)\eta
(x)e^{\gamma(x)}, \label{eq1.6}%
\end{equation}
where
\[
\gamma(x)=\int_{x}^{\infty}(t-x)|q(t)|dt,\;\eta(x)=\int_{x}^{\infty}|q(t)|dt.
\]
Since for $x\geq0$
\[
\gamma(x)\leq\int_{x}^{\infty}t|q(t)|dt\leq\int_{0}^{\infty}t|q(t)|dt=\gamma
\left(  0\right)  ,
\]
it follows from (\ref{eq1.5})-(\ref{eq1.6}) that (recalling that $y\geq0$)%
\begin{align}
|g_{1}(y)|  &  \leq|q(y)|\,\eta(y)e^{\gamma(y)}\label{g1}\\
&  \leq\ \ \eta(0)e^{\gamma\left(  0\right)  }\ \left\vert \,q(y)\right\vert
\nonumber
\end{align}
and%
\begin{align}
|g_{2}(y)|  &  \leq\left\vert q(y)\right\vert \int_{0}^{y}\left\vert
q\right\vert +2\eta(y)\int_{0}^{y}|q(z)|\eta(z)e^{\gamma(z)}dz\label{g2}\\
&  \leq\left\vert q(y)\right\vert \int_{0}^{y}\left\vert q\right\vert
+2\eta^{2}(0)e^{\gamma\left(  0\right)  }\ \eta(y).\nonumber
\end{align}
Combining now \ (\ref{Q'}) and (\ref{g1})-(\ref{g2}) yields (\ref{props of Q}).

It remains to show (\ref{rep for R+}). Substituting (\ref{q+Q}) into
(\ref{R/T}) we have%
\[
2ik\ \frac{R_{+}(k)}{T_{+}(k)}=%
{\displaystyle\int_{0}^{\infty}}
q(y)e^{-2iky}dy+%
{\displaystyle\int_{0}^{\infty}}
Q(y)e^{-2iky}dy.
\]
Evaluating the last integral by parts yields%
\[%
{\displaystyle\int_{0}^{\infty}}
Q(y)e^{-2iky}dy=-\left.  \frac{Q(y)e^{-2i\lambda y}}{2i\lambda}\right\vert
_{0}^{\infty}+\frac{1}{2i\lambda}\int_{0}^{\infty}Q^{\prime}(y)e^{-2i\lambda
y}dy.
\]
It follows from (\ref{q+Q}) and (\ref{eq1.5}) that the integrated term
vanishes and (\ref{rep for R+}) is proven.
\end{proof}

The split (\ref{R-split}) implies that the right reflection coefficient $R$
can be represented as an analytic function plus the right reflection
coefficient $R_{+}$ which need not admit analytic continuation from the real
line. Moreover, $R_{+}$ is completely determined by $q$ on $\left(
0,\infty\right)  $ (by simple shifting arguments, any interval $\left(
a,\infty\right)  $ can be considered). Some parts of Proposition \ref{on R}
appeared in our \cite{RybNON2011} and \cite{GruRybSIMA15}) but
(\ref{rep for R+}) is new. For $q$ supported on the full line, it was proven
in \cite{Deift79} that%
\[
R\left(  k\right)  =\frac{T\left(  k\right)  }{2ik}\int_{-\infty}^{\infty
}e^{-2ikx}g\left(  x\right)  dx,
\]
where $g$ satisfies%
\begin{equation}
\left\vert g\left(  x\right)  \right\vert \leq\left\vert q\left(  x\right)
\right\vert +const\left\{
\begin{array}
[c]{ccc}%
\int_{x}^{\infty}\left\vert q\right\vert  & , & x\geq0\\
\int_{-\infty}^{x}\left\vert q\right\vert  & , & x<0
\end{array}
\right.  , \label{|g|}%
\end{equation}
and nothing better can be said about $g$ in general. In the case of $q$
supported on $\left(  0,\infty\right)  $ this statement can be improved.
Indeed, (\ref{rep for R+}) implies that%
\[
g\left(  x\right)  =q\left(  x\right)  +Q\left(  x\right)
\]
with some absolutely continuous on $\left(  0,\infty\right)  $ function which
derivative $Q^{\prime}$ satisfies (\ref{|g|}).

\section{Hankel operators with oscillatory symbols\label{HO}}

We refer the reader to \cite{Nik2002} and \cite{Peller2003} for background
reading on Hankel operators. We recall that a function $f$ analytic in
$\mathbb{C}^{\pm}$ is in the Hardy space $H_{\pm}^{2}$ if
\[
\sup_{y>0}\int_{-\infty}^{\infty}\left\vert f(x\pm iy)\right\vert
^{2}\ dx<\infty.
\]
We will also need $H_{\pm}^{\infty}$, the algebra of analytic functions
uniformly bounded in $\mathbb{C}^{\pm}$. It is particularly important that
$H_{\pm}^{2}$ is a Hilbert space with the inner product induced from $L^{2}$:
\[
\langle f,g\rangle_{H_{\pm}^{2}}=\langle f,g\rangle_{L^{2}}=\left\langle
f,g\right\rangle =\int_{-\infty}^{\infty}f\left(  x\right)  \bar{g}\left(
x\right)  dx.
\]
It is well-known that $L^{2}=H_{+}^{2}\oplus H_{-}^{2},$ the orthogonal
(Riesz) projection $\mathbb{P}_{\pm}$ onto $H_{\pm}^{2}$ being given by%
\begin{equation}
(\mathbb{P}_{\pm}f)(x)=\pm\frac{1}{2\pi i}\int_{-\infty}^{\infty}%
\frac{f(s)\ ds}{s-(x\pm i0)}. \label{proj}%
\end{equation}

Let $(\mathbb{J}f)(x)=f(-x)$ be the operator of reflection. Given $\varphi\in
L^{\infty}$ the operator $\mathbb{H}(\varphi):H_{+}^{2}\rightarrow H_{+}^{2}$
defined by the formula%
\begin{equation}
\mathbb{H}(\varphi)f=\mathbb{JP}_{-}\varphi f,\ \ \ f\in H_{+}^{2},
\label{Hankel}%
\end{equation}
is called the Hankel\emph{ }operator with symbol $\varphi$.

It directly follows from the definition (\ref{Hankel}) that the Hankel
operator $\mathbb{H}(\varphi)$ is bounded if its symbol $\varphi$ is bounded
and $\mathbb{H}(\varphi+h)=\mathbb{H}(\varphi)$ for any $h\in H_{+}^{\infty}$.
The latter means that only part of $\varphi$ analytic in $\mathbb{C}^{-}$
(called co-analytic) matters. More specifically,%
\[
\mathbb{H}(\varphi)=\mathbb{H}(\widetilde{\mathbb{P}}_{-}\varphi),
\]
where%
\begin{align}
(\widetilde{\mathbb{P}}_{-}\varphi)(x)  &  =-\frac{1}{2\pi i}\int_{-\infty
}^{\infty}\left(  \frac{1}{\lambda-(x-i0)}-\frac{1}{\lambda+i}\right)
\varphi(\lambda)d\lambda\label{P-}\\
&  =(x+i)\left(  \mathbb{P}_{-}\frac{1}{\cdot+i}\varphi\right)
(x),\;\ \ \varphi\in L^{\infty}.\nonumber
\end{align}
We note that in general $\widetilde{\mathbb{P}}_{-}\varphi\notin H_{-}%
^{\infty}$ if $\varphi\in L^{\infty}$ but the Hankel operator $\mathbb{H}%
(\varphi)$ is still well-defined by (\ref{Hankel}) and bounded. If $\varphi\in
L^{2}$ then $\widetilde{\mathbb{P}}_{-}\varphi$ differs from $\mathbb{P}%
_{-}\varphi$ by a constant and thus $\mathbb{P}_{-}\varphi$ can be take as the
co-analytic part.

In the context of the KdV equation symbols of the following form%
\[
\ \varphi\left(  x\right)  =G\left(  x\right)  \xi_{\alpha,\beta}\left(
x\right)  ,
\]
naturally arise. Here $G\in L^{\infty}$, and%
\[
\text{ }\xi_{\alpha,\beta}\left(  x\right)  =\exp i\left(  \alpha x+\beta
x^{3}\right)  ,
\]
where $\alpha,\beta$ are real parameters, and $\beta>0.$ The main feature of
$\xi_{\alpha,\beta}$ is a rapid decay along any line $\mathbb{R}+ih$ in the
upper half plane and as a result the quality of $\mathbb{H}\left(
G\xi_{\alpha,\beta}\right)  $ may actually be better than $\mathbb{H}\left(
G\right)  $. E.g., if $G\in L^{\infty}$ and is analytic in $\mathbb{C}^{+}$
then (\ref{P-}) takes form ($h>0$)
\begin{equation}
(\widetilde{\mathbb{P}}_{-}\varphi)(x)=-\frac{1}{2\pi i}\int_{\mathbb{R}%
+ih}\left(  \frac{1}{\lambda-x}-\frac{1}{\lambda+i}\right)  G\left(
\lambda\right)  \xi_{\alpha,\beta}\left(  \lambda\right)  d\lambda
,\label{P- h}%
\end{equation}
which is an entire function as long as this integral is absolutely convergent.
This means that $\mathbb{H}\left(  G\xi_{\alpha,\beta}\right)  $ is in any
Shatten-von Neumann ideal $\mathfrak{S}_{p}$ ($0<p\leq\infty$) while
$\mathbb{H}\left(  G\right)  $ need not be even compact. Better yet,
$\mathbb{H}\left(  G\xi_{\alpha,\beta}\right)  $ can be differentiated in any
$\mathfrak{S}_{p}$ norm with respect to $\alpha,\beta$ infinitely many time.
Indeed, since for all $m,n$
\[
\partial_{\alpha}^{m}\partial_{\beta}^{n}(\widetilde{\mathbb{P}}_{-}%
\varphi)(x)=-\frac{1}{2\pi i}\int_{\mathbb{R}+ih}\left(  \frac{1}{\lambda
-x}-\frac{1}{\lambda+i}\right)  G\left(  \lambda\right)  \partial_{\alpha}%
^{m}\partial_{\beta}^{n}\xi_{\alpha,\beta}\left(  \lambda\right)  d\lambda
\]
are entire functions the operators defined by
\begin{equation}
\partial_{\alpha}^{m}\partial_{\beta}^{n}\mathbb{H}\left(  G\xi_{\alpha,\beta
}\right)  =\mathbb{H}\left(  \partial_{\alpha}^{m}\partial_{\beta}%
^{n}\widetilde{\mathbb{P}}_{-}G\xi_{\alpha,\beta}\right)  \label{diff H}%
\end{equation}
are all in $\mathfrak{S}_{p}$. Note that if we formally set%
\[
\partial_{\alpha}^{m}\partial_{\beta}^{n}\mathbb{H}\left(  G\xi_{\alpha,\beta
}\right)  =\mathbb{H}\left(  \partial_{\alpha}^{m}\partial_{\beta}^{n}%
G\xi_{\alpha,\beta}\right)  ,
\]
then we would have the Hankel operator with an unbounded symbol $\left(
ix\right)  ^{m+3n}G\left(  x\right)  \xi_{\alpha,\beta}\left(  x\right)  $.
Thus, (\ref{diff H}) can be viewed as a way to regularize Hankel operators
with certain unbounded oscillatory symbols.

We have to work a bit harder if $G$ doesn't extend analytically into
$\mathbb{C}^{+}$ but has some smoothness. We can no longer apply the Cauchy
theorem to evaluate $\widetilde{\mathbb{P}}_{-}\varphi$ but the Cauchy-Green
formula will do. This is the case when
\[
G\left(  x\right)  =\int_{0}^{\infty}e^{-ixs}g\left(  s\right)  ds
\]
with some $g\in L_{N}^{1}\left(  \mathbb{R}_{+}\right)  $, $N\geq1$.
Apparently for any integer $n\leq N$%
\begin{equation}
G^{\left(  n\right)  }\in H^{\infty}\left(  \mathbb{C}^{-}\right)  \cap
C_{0}\left(  \mathbb{R}\right)  \label{n deriv}%
\end{equation}
but $G$ doesn't in general extend analytically into $\mathbb{C}^{+}$ and we
can no longer deform the contour into the upper half plane. Let us now
consider instead its pseudoanalytic extension into $\mathbb{C}^{+}$. Following
\cite{Dyn76} we call $F\left(  x,y\right)  $ a pseudoanalytic extension of
$f\left(  x\right)  $ into $\mathbb{C}$ if
\[
F\left(  x,0\right)  =f\left(  x\right)  \text{ and }\overline{\partial
}F\left(  x,y\right)  \rightarrow0,y\rightarrow0,
\]
where $\overline{\partial}:=\left(  1/2\right)  \left(  \partial_{x}%
+i\partial_{y}\right)  $. Note that due to (\ref{n deriv}) for $n\leq N$ the
Taylor formula%
\begin{equation}
G\left(  z,\overline{z}\right)  =\sum_{m=0}^{n-1}\frac{G^{\left(  m\right)
}\left(  \overline{z}\right)  }{m!}\left(  z-\overline{z}\right)
^{m},\ \ \ z\in\mathbb{C}^{+},\label{taylor}%
\end{equation}
defines such continuation as $G\left(  z,\overline{z}\right)  $ clearly agrees
with $G$ on the real line and for $\lambda\in\mathbb{C}^{+}$%
\begin{equation}
\overline{\partial}G\left(  z,\overline{z}\right)  =\frac{G^{\left(  n\right)
}\left(  \overline{z}\right)  }{\left(  n-1\right)  !}\left(  z-\overline
{z}\right)  ^{n-1},\ \ \ n\leq N.\label{decay close to R}%
\end{equation}
By the Cauchy-Green formula applied, say, to the strip $0\leq\operatorname{Im}%
z\leq1$ we have ($\lambda=u+iv$)%
\begin{align}
&  \widetilde{\mathbb{P}}_{-}G\xi_{\alpha,\beta}\left(  x\right)  \label{3}\\
&  =\frac{x+i}{2\pi i}\int_{-\infty}^{\infty}\frac{\xi_{\alpha,\beta}\left(
\lambda\right)  G\left(  \lambda\right)  }{\lambda+i}\ \frac{d\lambda}%
{\lambda-\left(  x-i0\right)  }\nonumber\\
&  =\frac{x+i}{2\pi i}\int_{\mathbb{R}+i}\frac{\xi_{\alpha,\beta}\left(
\lambda\right)  G\left(  \lambda,\overline{\lambda}\right)  }{\lambda
+i}\ \frac{d\lambda}{\lambda-x}\nonumber\\
&  +\frac{x+i}{\pi}\int_{0\leq\operatorname{Im}\lambda\leq1}\frac{\xi
_{\alpha,\beta}\left(  \lambda\right)  \overline{\partial}G\left(
\lambda,\overline{\lambda}\right)  }{\lambda+i}\ \frac{dudv}{\lambda
-x}.\nonumber
\end{align}
The first integral on the right hand side of (\ref{3}) is identical to
(\ref{P- h}) and thus we only need to study%
\[
\phi_{\alpha,\beta}\left(  x\right)  :=\frac{x+i}{\pi}\int_{0\leq
\operatorname{Im}\lambda\leq1}\frac{\xi_{\alpha,\beta}\left(  \lambda\right)
\overline{\partial}G\left(  \lambda,\overline{\lambda}\right)  }{\lambda
+i}\ \frac{dudv}{\lambda-x},
\]
where%
\begin{align*}
\overline{\partial}G\left(  \lambda,\overline{\lambda}\right)   &
=\frac{G^{\left(  n\right)  }\left(  \overline{\lambda}\right)  }{\left(
n-1\right)  !}\left(  \lambda-\overline{\lambda}\right)  ^{n-1}\\
&  =\left\{  \int_{0}^{\infty}\left(  2s\right)  ^{n}e^{-i\overline{\lambda}%
s}g\left(  s\right)  ds\right\}  \frac{v^{n-1}}{2i\left(  n-1\right)  !}.
\end{align*}
We have%
\begin{align*}
&  \phi_{\alpha,\beta}\left(  x\right)  \\
&  =\frac{x+i}{2\pi i}\int_{0\leq\operatorname{Im}\lambda\leq1}\frac
{\xi_{\alpha,\beta}\left(  \lambda\right)  }{\lambda+i}\ \left\{  \int%
_{0}^{\infty}\left(  2s\right)  ^{n}e^{-i\overline{\lambda}s}g\left(
s\right)  ds\right\}  \frac{v^{n-1}}{\left(  n-1\right)  !}\frac{dudv}%
{\lambda-x}\\
&  =\frac{x+i}{2\pi i}\int_{0}^{\infty}\left\{  \int_{0}^{1}dve^{-2vs}%
v^{n-1}\int_{\mathbb{R}+iv}\frac{\xi_{\alpha,\beta-s}\left(  \lambda\right)
}{\lambda+i}\frac{d\lambda}{\lambda-x}\right\}  \left(  2s\right)
^{n}g\left(  s\right)  ds.
\end{align*}
The integral with respect to $d\lambda$ is clearly independent of contour and
hence%
\begin{equation}
\phi_{\alpha,\beta}\left(  x\right)  =\int_{0}^{\infty}I_{0}\left(
s,\alpha,\beta\right)  \gamma_{n}\left(  s\right)  \left(  2s\right)
^{n}g\left(  s\right)  ds,\label{phi alfa better}%
\end{equation}
where%
\[
I_{0}\left(  x,\alpha-s,\beta\right)  :=\frac{x+i}{2\pi i}\int_{\mathbb{R}%
+i}\frac{\xi_{\alpha-s,\beta}\left(  \lambda\right)  }{\lambda+i}%
\frac{d\lambda}{\lambda-x}%
\]
and%
\[
\gamma_{n}\left(  s\right)  :=\int_{0}^{1}dve^{-2vs}\frac{v^{n-1}}{\left(
n-1\right)  !}.
\]
Differentiating $\phi_{\alpha,\beta}\left(  x\right)  $ formally in $\beta$ we
have%
\begin{equation}
\partial_{\alpha}^{j}\phi_{\alpha,\beta}\left(  x\right)  =\int_{0}^{\infty
}\partial_{\alpha}^{j}I_{0}\left(  x,\alpha-s,\beta\right)  \left(  2s\right)
^{n}\gamma_{n}\left(  s\right)  g\left(  s\right)  ds.\label{d_b}%
\end{equation}
Apparently, this formal differentiation is valid as long as the integral is
absolutely convergent. But%
\begin{align*}
\partial_{\alpha}^{j}I_{0}\left(  x,\alpha-s,\beta\right)   &  =\frac
{x+i}{2\pi i}\int_{\mathbb{R}+i}\frac{\left(  i\lambda\right)  ^{j}\xi
_{\alpha-s,\beta}\left(  \lambda\right)  }{\lambda+i}\frac{d\lambda}%
{\lambda-x}\\
&  =:I_{j}\left(  x,\alpha-s,\beta\right)
\end{align*}
is clearly absolutely convergent and%
\[
\left(  2s\right)  ^{n}\gamma_{n}\left(  s\right)  \leq\left(  2s\right)
^{n}\int_{0}^{\infty}dve^{-2vs}\frac{v^{n-1}}{\left(  n-1\right)  !}=1.
\]
Note that the integral defining $I_{j}\left(  x,\alpha-s,\beta\right)  $ is
independent of contour. The current one, $\mathbb{R}+i$, is not suitable for
getting required bounds on its growth in $s\rightarrow\infty$ and we will
later deform it as needed (see (\ref{integral})). It follows from (\ref{d_b})
that%
\[
\left\vert \partial_{\alpha}^{j}\phi_{\alpha,\beta}\left(  x\right)
\right\vert \leq\int_{0}^{\infty}\left\vert I_{j}\left(  x,\alpha
-s,\beta\right)  \right\vert \left\vert g\left(  s\right)  \right\vert ds
\]
and thus $\partial_{\alpha}^{j}\mathbb{H}\left(  \phi_{\alpha,\beta}\right)  $
is well-defined by $\partial_{\alpha}^{j}\mathbb{H}\left(  \phi_{\alpha,\beta
}\right)  =\mathbb{H}\left(  \partial_{\alpha}^{j}\phi_{\alpha,\beta}\right)
$ as a bounded operator if for each $\alpha$ and $\beta>0$
\[
\int_{-\beta}^{\infty}\left\vert I_{j}\left(  x,-s,\beta\right)  \right\vert
\left\vert g\left(  s+\beta\right)  \right\vert ds\in L^{\infty}.
\]
We will however need conditions on the decay of $g$ which guarantee the
membership of $\partial_{\alpha}^{j}\mathbb{H}\left(  \phi_{\alpha,\beta
}\right)  $ in trace class $\mathfrak{S}_{1}$ for a specified number $j$. We
studied this question in \cite{GruRybMatNotes18} where we proved

\begin{theorem}
\label{Trace class theorem}Let real $g\in L_{N}^{1}\left(  \mathbb{R}%
_{+}\right)  $ and $\phi_{\alpha,\beta}$ be given by (\ref{phi alfa better})
then the Hankel operator $\mathbb{H}\left(  \phi_{\alpha,\beta}\right)  $ is
$\left\lfloor 2N\right\rfloor -1$ times continuously differentiable in
$\alpha$ in trace norm for every real $\alpha$ and $\beta>0$.
\end{theorem}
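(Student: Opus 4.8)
Here is the plan I would follow.

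\medskip

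\noindent\emph{Strategy and reduction.} The representation (\ref{d_b}) is the backbone: with $I_j:=\partial_\alpha^j I_0$ it reads $\partial_\alpha^j\phi_{\alpha,\beta}(x)=\int_0^\infty I_j(x,\alpha-s,\beta)\,(2s)^n\gamma_n(s)\,g(s)\,ds$. Since $\mathbb{H}$ is bounded and linear, the whole theorem will come down to one quantitative trace-norm estimate plus an interchange of $\mathbb{H}$ with the $s$-integral and a differentiation under the integral sign. Put $m:=\lfloor 2N\rfloor-1$. The estimate I would prove is: for every $j\ge 0$ and $\beta>0$ there is $C=C(j,\beta)$ with
\begin{equation}
\bigl\|\mathbb{H}\bigl(I_j(\cdot,\mu,\beta)\bigr)\bigr\|_{\mathfrak{S}_1}\le C\,(1+|\mu|)^{(j+1)/2},\qquad \mu\in\mathbb{R}. \label{keyest}
\end{equation}

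\medskip

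\noindent\emph{How the theorem follows from (\ref{keyest}).} Because $(2s)^n\gamma_n(s)\le 1$ (shown above) and $(j+1)/2\le N$ for $0\le j\le m$ (as $j+1\le\lfloor 2N\rfloor\le 2N$), the operator-valued integral obtained by applying $\mathbb{H}$ to $\partial_\alpha^j\phi_{\alpha,\beta}$ is absolutely convergent in $\mathfrak{S}_1$:
\[
\int_0^\infty\bigl\|\mathbb{H}\bigl(I_j(\cdot,\alpha-s,\beta)\bigr)\bigr\|_{\mathfrak{S}_1}(2s)^n\gamma_n(s)\,|g(s)|\,ds\le C_{\alpha,\beta}\int_0^\infty(1+s)^N|g(s)|\,ds<\infty ,
\]
the last integral being finite since $g\in L_N^1(\mathbb{R}_+)$. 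Hence $\mathbb{H}(\partial_\alpha^j\phi_{\alpha,\beta})\in\mathfrak{S}_1$ and equals this Bochner integral for $0\le j\le m$; continuity of $\alpha\mapsto\mathbb{H}(\partial_\alpha^j\phi_{\alpha,\beta})$ in $\mathfrak{S}_1$ follows by dominated convergence, the majorant above being locally uniform in $\alpha$. For $0\le j\le m-1$ one differentiates the Bochner integral in $\alpha$ under the integral sign: the $\mathfrak{S}_1$-derivative of the integrand is $\mathbb{H}(I_{j+1}(\cdot,\alpha-s,\beta))(2s)^n\gamma_n(s)g(s)$, dominated by $C(1+s)^{(j+2)/2}|g(s)|\in L^1(ds)$ because $(j+2)/2\le N$. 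Iterating from $j=0$ to $j=m$ then yields $\mathbb{H}(\phi_{\alpha,\beta})\in C^m(\mathbb{R}_\alpha;\mathfrak{S}_1)$ with $\partial_\alpha^j\mathbb{H}(\phi_{\alpha,\beta})=\mathbb{H}(\partial_\alpha^j\phi_{\alpha,\beta})$, which is the assertion.

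\medskip

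\noindent\emph{Proof of (\ref{keyest}).} For $\mu\ge 0$ the phase $\mu\lambda+\beta\lambda^3$ has no real stationary point, so the contour defining $I_j(\cdot,\mu,\beta)$ can be pushed arbitrarily far into $\mathbb{C}^+$; the resulting co-analytic symbol is entire with superexponential decay and $\mathbb{H}(I_j(\cdot,\mu,\beta))$ is trace class with a bound uniform in $\mu\ge 0$, so (\ref{keyest}) is trivial. Assume $\mu=-|\mu|\le -1$. The stationary points now sit on the real axis at $\pm\sqrt{|\mu|/3\beta}$, and on a horizontal line one has $|\xi_{\mu,\beta}(u+ih)|=e^{h(\beta h^2+|\mu|)}e^{-3\beta h u^2}$, a Gaussian of width $\asymp(\beta h)^{-1/2}$ amplified by $e^{h|\mu|}$. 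I would first absorb the growing prefactor $x+i$ in $\widetilde{\mathbb{P}}_-$ by relating $\mathbb{H}(I_j(\cdot,\mu,\beta))$ to the Hankel operator of the honestly integrable symbol $(i\lambda)^j\xi_{\mu,\beta}(\lambda)/(\lambda+i)$ modulo a bounded/finite-rank correction, then deform the remaining integral to $\mathbb{R}+ih$ with $h\asymp|\mu|^{-1}$ — the height at which the Gaussian width matches the distance to the stationary points. Using that the $\mathbb{C}^+$-Fourier transform of $\xi_{\mu,\beta}$ is, up to constants, $\beta^{-1/3}\mathrm{Ai}\bigl(\beta^{-1/3}(\,\cdot\,+\mu)\bigr)$, one estimates the trace norm either (a) by factoring the Hankel operator through the Hilbert--Schmidt weight $e^{-hx}$ — writing $\|\cdot\|_{\mathfrak{S}_1}$ as a product of two $\mathfrak{S}_2$ factors, each computed from the Airy profile on $\mathbb{R}+ih$ — or (b) by Peller's characterisation $\mathbb{H}(\varphi)\in\mathfrak{S}_1\iff\mathbb{P}_-\varphi\in B_{1,1}^1$ (\cite{Peller2003}), with the Besov norm read off from the classical oscillatory asymptotics of $\mathrm{Ai}^{(j)}$. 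Optimising in $h$ produces the power $(1+|\mu|)^{(j+1)/2}$, which is the scaling exponent of this configuration: $j$ differentiations each land on the Airy phase (slope $\asymp|\mu|^{1/2}$ on the relevant window), contributing $|\mu|^{j/2}$, multiplied by the $|\mu|^{1/2}$ already present at $j=0$.

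\medskip

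\noindent\emph{Main obstacle.} The hard point is (\ref{keyest}) with the \emph{sharp} exponent $(j+1)/2$. A crude contour bound of the shape $\|\mathbb{H}(\varphi)\|_{\mathfrak{S}_1}\lesssim h^{-1}\|\varphi(\cdot+ih)\|$ overshoots by (at least) a full power of $|\mu|$ (and a logarithm at small $j$), which would degrade the number of trace-norm derivatives from $\lfloor 2N\rfloor-1$ to roughly $\lfloor 2N\rfloor-2$ and, crucially, break down precisely at half-integer $N$; one must therefore exploit the genuine oscillation/Airy structure of $\xi_{\mu,\beta}$, not merely its size, keeping track of cancellation dyadic block by dyadic block in the Besov norm (equivalently, running the stationary-phase analysis on each piece). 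This quantitative estimate is the technical core carried out in \cite{GruRybMatNotes18}; once it is in hand, the rest of the argument is the routine dominated-convergence bookkeeping outlined above.
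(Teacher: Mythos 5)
Your proposal follows essentially the same route as the paper: reduce everything to a quantitative trace-norm bound on $\mathbb{H}\left(I_{j}(\cdot,\alpha-s,\beta)\right)$ that is then integrated against $(1+s)^{N}\left\vert g(s)\right\vert$, with the bound itself obtained from Peller's criterion and a stationary-phase/steepest-descent analysis of $\xi_{\mu,\beta}$ near the coalescing stationary points and the pole; your exponent $(1+\left\vert \mu\right\vert)^{(j+1)/2}$ is exactly the $s^{n/2}$ weight appearing in the paper's integral (\ref{integral}). Like the paper, you correctly identify the sharp oscillatory estimate as the technical core and defer it to \cite{GruRybMatNotes18}, so the proposal is at the same level of completeness as the paper's own treatment.
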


Note that since $\partial_{\beta}^{j}\xi_{\alpha,\beta}=\partial_{\alpha}%
^{3j}\xi_{\alpha,\beta}$, Theorem \ref{Trace class theorem} can be restated
for $\beta$ accordingly. We refer to \cite{GruRybMatNotes18} for the complete
proof. We only mention that our arguments rely on a deep characterization of
trace class Hankel operators by Peller \cite{Peller2003} which says that,
given $\varphi\in L^{\infty}\left(  \mathbb{R}\right)  $, the Hankel operator
$\mathbb{H}(\varphi)$ is trace class iff $\left(  \widetilde{\mathbb{P}}%
_{-}\varphi\right)  ^{\prime\prime}\in L^{1}\left(  \mathbb{C}^{-}\right)  $
and $\sup_{\operatorname{Im}z\leq-1}\left\vert \widetilde{\mathbb{P}}%
_{-}\varphi\left(  z\right)  \right\vert <\infty$. In our case the problem
boils down to the following question. Given integer $n$, find the least
possible $N$ such that%
\begin{equation}
g\in L_{N}^{1}\left(  \mathbb{R}_{+}\right)  \Longrightarrow\int_{0}^{\infty
}\left\{  \int_{\mathbb{R}+i}\frac{\lambda^{n}\xi_{\alpha-s,\beta}\left(
\lambda\right)  }{\left(  \lambda-z\right)  ^{3}}d\lambda\right\}  g\left(
s\right)  ds\in L^{1}\left(  \mathbb{C}^{-}\right)  . \label{L1}%
\end{equation}
Proving (\ref{L1}) reduces essentially to analyzing%
\begin{equation}
\int_{0}^{\infty}dss^{n/2}\left\vert g\left(  s+\alpha\right)  \right\vert
\int_{\mathbb{C}^{+}}\left\vert \int_{\Gamma}\ e^{is^{3/2}f\left(
\lambda\right)  }\frac{\lambda^{n}d\lambda}{\left(  \lambda-x+iy\right)  ^{3}%
}\right\vert dxdy, \label{integral}%
\end{equation}
where $f\left(  \lambda\right)  =\lambda^{3}/3-\lambda$ is the phase function
and $\Gamma$ is a contour passing through its stationary points $\lambda=\pm
1$. The hardest part is treating the neighborhood of points $x-iy$ close to
$\lambda=\pm1$. One needs to use the steepest decent approximation with
coalescent stationary points and poles (see \cite{Wong2001}). The payoff is
however an optimal estimate for (\ref{integral}), which in turn means that, in
a sense, Theorem \ref{Trace class theorem} is optimal.

\section{The separation of infinities principle\label{sect on sep of inft}}

Through this section we assume that our initial data $q$ is short-range. Let
$\{R,(\kappa_{n},c_{n})\}$ be the scattering data for $\mathbb{L}_{q}$.
Consider the Hankel operator $\mathbb{H}(\varphi)$ with the symbol
\begin{equation}
\varphi(k)=\sum_{n}\frac{c_{n}\xi_{x,t}(i\kappa_{n})\,}{ik+\kappa_{n}}%
+\xi_{x,t}(k)R(k), \label{fi}%
\end{equation}
were%
\[
\xi_{x,t}(k)=e^{i\left(  8k^{3}t+2kx\right)  }.
\]

\begin{theorem}
[separation of infinities principle]\label{sep of inft}Under conditions and in
notation of Proposition \ref{on R}%
\[
\mathbb{H}\left(  \varphi\right)  =\mathbb{H}\left(  \varphi_{+}\right)
+\mathbb{H}(\Phi),
\]
where%
\begin{equation}
\varphi_{+}(k)=\sum_{n}\frac{c_{n}^{+}\xi_{x,t}(i\kappa_{n}^{+})\,}%
{ik+\kappa_{n}^{+}}+\xi_{x,t}(k)R_{+}(k)\label{fi +}%
\end{equation}
and%
\[
\Phi\left(  k\right)  =-\frac{1}{2\pi i}\int_{\mathbb{R}+ih}\frac{\xi
_{x,t}(\lambda)G(\lambda)}{\lambda-k}\ d\lambda,\ \ \ h>\max\left(  \kappa
_{n}\right)  .
\]

\end{theorem}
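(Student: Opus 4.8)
The plan is to compute $\Phi$ by shifting its contour of integration down to the real axis; the outcome will be, up to an additive constant, the co-analytic part of $\varphi-\varphi_{+}$, and then the identities $\mathbb{H}(\psi)=\mathbb{H}(\widetilde{\mathbb{P}}_{-}\psi)$ and $\mathbb{H}(\psi+h)=\mathbb{H}(\psi)$ for $h\in H_{+}^{\infty}$ (both recorded in Section \ref{HO}), together with linearity of $\mathbb{H}(\cdot)$, give $\mathbb{H}(\Phi)=\mathbb{H}(\varphi-\varphi_{+})=\mathbb{H}(\varphi)-\mathbb{H}(\varphi_{+})$, which is the assertion. Note that $\varphi,\varphi_{+},\varphi-\varphi_{+}$ and $\Phi$ all belong to $L^{\infty}(\mathbb{R})$ — for $\Phi$ because $|\lambda-k|\geq h$ on $\mathbb{R}+ih$ when $k$ is real, while $\xi_{x,t}G$ is integrable on that line — so all the Hankel operators below are well defined.

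First, the split $R=G+R_{+}$ of Proposition \ref{on R}, combined with (\ref{fi})--(\ref{fi +}), gives
\[
\varphi-\varphi_{+}=D+\xi_{x,t}G,\qquad D(k):=\sum_{n}\frac{c_{n}\xi_{x,t}(i\kappa_{n})}{ik+\kappa_{n}}-\sum_{n}\frac{c_{n}^{+}\xi_{x,t}(i\kappa_{n}^{+})}{ik+\kappa_{n}^{+}},
\]
where $D$ is a finite rational function, analytic in $\mathbb{C}^{-}$ and vanishing at infinity, hence co-analytic. Now fix $k\in\mathbb{C}^{-}$ and deform the contour in $\Phi(k)=-\tfrac{1}{2\pi i}\int_{\mathbb{R}+ih}\tfrac{\xi_{x,t}(\lambda)G(\lambda)}{\lambda-k}\,d\lambda$ down to $\mathbb{R}$. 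This is legitimate: by Proposition \ref{on R} the integrand is meromorphic in the strip $0\le\operatorname{Im}\lambda\le h$ with simple poles only at the $i\kappa_{n}$ and $i\kappa_{n}^{+}$, all lying below the line $\mathbb{R}+ih$ (this is what the hypothesis $h>\max_{n}\kappa_{n}$ secures), and the two vertical connecting segments vanish in the limit because $|\xi_{x,t}(u+iv)|=e^{-24tu^{2}v+8tv^{3}-2xv}$ decays like a Gaussian in $u$ along every horizontal line (here $t>0$) while $G(\lambda)=O(1/\lambda)$, which also makes the resulting integral over $\mathbb{R}$ absolutely convergent. Collecting the residues and using (\ref{proj}) to recognise the boundary value of the Cauchy integral over $\mathbb{R}$ as $\mathbb{P}_{-}(\xi_{x,t}G)$, one obtains
\[
\Phi(k)=(\mathbb{P}_{-}(\xi_{x,t}G))(k)+\sum_{n}\frac{\operatorname*{Res}_{\lambda=i\kappa_{n}}(\xi_{x,t}G)}{i\kappa_{n}-k}+\sum_{n}\frac{\operatorname*{Res}_{\lambda=i\kappa_{n}^{+}}(\xi_{x,t}G)}{i\kappa_{n}^{+}-k}.
\]
Using the residue formulas of Proposition \ref{on R} and the elementary identities $\tfrac{ic_{n}}{i\kappa_{n}-k}=\tfrac{c_{n}}{ik+\kappa_{n}}$ (and likewise for the $+$ terms), the two residue sums are seen to equal exactly $D(k)$; hence on $\mathbb{R}$ one has $\Phi=\mathbb{P}_{-}(\xi_{x,t}G)+D$. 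Since $\xi_{x,t}G\in L^{2}\cap L^{\infty}$ ($|\xi_{x,t}|\equiv 1$ on $\mathbb{R}$, and $G$ is bounded and $O(1/k)$), $\mathbb{P}_{-}(\xi_{x,t}G)$ differs from $\widetilde{\mathbb{P}}_{-}(\xi_{x,t}G)$ by a constant (the remark following (\ref{P-})), so $\mathbb{H}(\mathbb{P}_{-}(\xi_{x,t}G))=\mathbb{H}(\widetilde{\mathbb{P}}_{-}(\xi_{x,t}G))=\mathbb{H}(\xi_{x,t}G)$, and therefore $\mathbb{H}(\Phi)=\mathbb{H}(\xi_{x,t}G)+\mathbb{H}(D)=\mathbb{H}(\xi_{x,t}G+D)=\mathbb{H}(\varphi-\varphi_{+})=\mathbb{H}(\varphi)-\mathbb{H}(\varphi_{+})$.

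The step I expect to be the real obstacle is the residue bookkeeping: it is there that the precise normalizations of $c_{n},c_{n}^{+}$ and the exact residues of $G$ from Proposition \ref{on R} are indispensable, with scrupulous attention to signs, and where one uses that an $i\kappa_{n}^{+}$ need not be a bound state of $\mathbb{L}_{q}$ at all — so the pole of $\xi_{x,t}G$ at $i\kappa_{n}^{+}$ is genuine and must be matched by $D$. Everything else (the contour shift, the Gaussian decay estimates, the convergence of the Cauchy integral) is routine once one has the decay of $\xi_{x,t}$ along horizontal lines and the classical large-$k$ behaviour of the scattering coefficients for $q\in L_{1}^{1}$.
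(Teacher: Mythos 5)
Your proof is essentially the paper's own: the paper likewise identifies $\mathbb{P}_-(\xi_{x,t}G)$ with $\Phi-\phi+\phi_+$ by comparing the Cauchy integrals over $\mathbb{R}$ and $\mathbb{R}+ih$ via the residue theorem, and then concludes from $\mathbb{H}(\psi)=\mathbb{H}(\mathbb{P}_-\psi)$ and linearity. One caveat, precisely at the step you flagged as the delicate one: if you take the residues literally from (\ref{residues}), i.e.\ $\operatorname*{Res}_{k=i\kappa_n^+}G=+ic_n^+$, your second residue sum comes out as $+\sum_n c_n^+\xi_{x,t}(i\kappa_n^+)/(ik+\kappa_n^+)=+\phi_+$ rather than the $-\phi_+$ that $D$ requires, so the bookkeeping does not close as written. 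The resolution is that (\ref{residues}) carries a sign slip: by (\ref{G}) the term of $G$ producing the pole at $i\kappa_n^+$ is $-T_+(k)/\psi_+(0,k)$, whose residue there is $-ic_n^+$ (this is the sign the paper's own proof of Theorem \ref{sep of inft} actually uses); with that correction your two sums equal $D$ exactly and the rest of your argument goes through.
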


\begin{proof}
Set%
\[
\phi\left(  k\right)  :=\sum_{n}\frac{c_{n}\xi_{x,t}(i\kappa_{n})}%
{ik+\kappa_{n}},\ \ \ \phi_{+}\left(  k\right)  :=\sum_{n}\frac{c_{n}^{+}%
\xi_{x,t}(i\kappa_{n}^{+})\,}{ik+\kappa_{n}^{+}},
\]
which are rational function with simple poles at $\left(  i\kappa_{n}\right)
,\left(  i\kappa_{n}^{+}\right)  ,$ respectively. Consider the co-analytic
part of $\xi_{x,t}G$ (as is well-known, $R\in L^{2}$) :
\[
\left(  \mathbb{P}_{-}\xi_{x,t}G\right)  (k)=-\frac{1}{2\pi i}\int%
_{\mathbb{R}}\frac{\xi_{x,t}(\lambda)G(\lambda)}{\lambda-(k-i0)}d\lambda.
\]
By Proposition \ref{on R}, $\xi_{x,t}G$ is meromorphic in $\mathbb{C}^{+}$ and
by the residue theorem we then have ($h>\max\left(  \kappa_{n}\right)  $)%
\begin{align*}
-  &  \frac{1}{2\pi i}\int_{\mathbb{R}}\frac{\xi_{x,t}(\lambda)G(\lambda
)}{\lambda-(k-i0)}d\lambda+\frac{1}{2\pi i}\int_{\mathbb{R}+ih}\frac{\xi
_{x,t}(\lambda)G(\lambda)}{\lambda-k}\ d\lambda\\
&  =-\sum_{n}\frac{ic_{n}\xi_{x,t}(i\kappa_{n})}{i\kappa_{n}-k}+\sum_{n}%
\frac{ic_{n}^{+}\xi_{x,t}(i\kappa_{n}^{+})}{i\kappa_{n}^{+}-k}\\
&  =-\phi+\phi_{+}.
\end{align*}
It follows that%
\[
\mathbb{P}_{-}\xi_{x,t}G=\Phi-\phi+\phi_{+}.
\]
By Proposition \ref{on R} then%
\begin{align*}
\mathbb{H}\left(  \varphi\right)   &  =\mathbb{H}\left(  \phi\right)
+\mathbb{H}\left(  \xi_{x,t}R\right) \\
&  =\mathbb{H}\left(  \phi\right)  +\mathbb{H}\left(  \xi_{x,t}R_{+}\right)
+\mathbb{H}\left(  \xi_{x,t}G\right) \\
&  =\mathbb{H}\left(  \phi\right)  +\mathbb{H}\left(  \xi_{x,t}R_{+}\right)
+\mathbb{H}\left(  \mathbb{P}_{-}\xi_{x,t}G\right) \\
&  =\mathbb{H}\left(  \phi\right)  +\mathbb{H}\left(  \xi_{x,t}R_{+}\right)
+\mathbb{H}\left(  \Phi-\phi+\phi_{+}\right) \\
&  =\mathbb{H}\left(  \xi_{x,t}R_{+}\right)  +\mathbb{H}\left(  \Phi\right)
+\mathbb{H}\left(  \phi_{+}\right) \\
&  =\mathbb{H}\left(  \varphi_{+}\right)  +\mathbb{H}\left(  \Phi\right)
\end{align*}
and the theorem is proven.
\end{proof}

Theorem \ref{sep of inft} can be interpreted as follows. Given scattering data
for $\mathbb{L}_{q}$, the Hankel operator $\mathbb{H}\left(  \varphi\right)  $
associated with these data is different from the one corresponding to the data
for $\mathbb{L}_{q_{+}}$ by the Hankel operator with an analytic symbol. Thus
$\mathbb{H}\left(  \varphi_{+}\right)  $ is completely determined by $q$ on
$\left(  0,\infty\right)  $. The part $\mathbb{H}(\Phi)$ depends on $q$ on the
whole line but has some nice properties (see below).

Our application of Theorem \ref{sep of inft} to the KdV equation is based on
what we call the Dyson formula (aka Bargmann or log-determinant formula). It
says that a $L_{1}^{1}$ potential $q\left(  x\right)  $ can be recovered from
the scattering data $\{R,(\kappa_{n},c_{n})\}$ by the formula%
\begin{equation}
q(x)=-2\partial_{x}^{2}\log\det\left\{  1+\mathbb{H}(\varphi_{x})\right\}
,\varphi_{x}\left(  k\right)  :=\sum_{n}\frac{c_{n}e^{-2\kappa_{n}x}%
\,}{ik+\kappa_{n}}+e^{2ikx}R(k).\label{Dyson}%
\end{equation}
where the determinant is understood in the classical Fredholm sense. 

The formula (\ref{Dyson}) has a long history. If $R=0$ (reflectionless $q$)
the Marchenko integral equation turns into a (finite) linear system and
(\ref{Dyson}) follows immediately from the Cramer rule. This idea is extended
to the general $L_{1}^{1}$ case in Faddeev's survey \cite{Faddeev}, where it
naturally appears as nothing but a different (equivalent) way of writing the
solution to the Marchenko integral equation. We first learned about
(\ref{Dyson}) from \cite{Faddeev} but Dyson in his influential \cite{Dyson76}
refers to Faddeev's \cite{Faddeev1959} available first in Russian in 1959.
Dyson links (\ref{Dyson}) to Fredholm determinants arising in random matrix
theory and it is likely why (\ref{Dyson}) is frequently associated with him.
In the context of integrable systems, (\ref{Dyson}) is revisited in 1984 by
Poppe in \cite{Poppe1984} where it is related to the famous Hirota tau
function. We have also seen (\ref{Dyson}) used in the KdV context with
references to Bargmann and Moser (i.e. it was already known back in the early
1950s). We refer the interested reader to \cite{Bornemann2010} for many other
applications of Fredholm determinants and associated numerics. 

Since the Marchenko integral operator is unitarily equivalent to
$\mathbb{H}(\varphi_{x})$, our version (\ref{Dyson}) immediately follows from
that of \cite{Faddeev}. 

As was discussed in Introduction, the KdV equation with data $q\in L_{1}^{1}$
is well-posed at least in $H^{-s}$ with $s>0$ and its solution $u(x,t)$ can be
obtained from solving the Marchenko integral equation and written as%
\[
u(x,t)=-2\partial_{x}^{2}\log\det\left\{  1+\mathbb{H}(x,t)\right\}
,\ \ \ \mathbb{H}(x,t):=\mathbb{H}\left(  \varphi\right)  ,
\]
where $\varphi$ is defined by (\ref{fi}). As we proved in
\cite{GruRybMatNotes18}, $\mathbb{H}(x,t)$ is trace class and hence
$\det\left\{  1+\mathbb{H}(x,t)\right\}  $ is well-defined in the classical
Fredholm sense. To prove the necessary smoothness we show that the condition
(\ref{cond 2}) provides five continuous $x$ derivatives of $\mathbb{H}(x,t)$
(and one in $t$). This will be done in the next section. Incidentally,
differentiability of the Fredholm determinant is also discussed in
\cite{Poppe1984} under additional smoothness assumptions on the initial data. 

Theorem \ref{sep of inft} and the well-known formula%
\[
\det\left(
\begin{array}
[c]{cc}%
A_{11} & A_{12}\\
A_{21} & A_{22}%
\end{array}
\right)  =\det A_{11}\det\left(  A_{22}-A_{21}A_{11}^{-1}A_{12}\right)  ,
\]
readily imply

\begin{theorem}
[separation of infinities principle for KdV]\label{sep of inft for kdv}The
solution to the Cauchy problem for the KdV equation (\ref{KdV}) with $q\in
L_{1}^{1}$ can be written in the following forms
\begin{align*}
u(x,t)  &  =-2\partial_{x}^{2}\log\det\left\{  1+\mathbb{H}_{+}%
(x,t)+\mathbb{H}\left(  \Phi\right)  \right\} \\
&  =u_{+}(x,t)-2\partial_{x}^{2}\log\det\left\{  1+\left[  1+\mathbb{H}%
_{+}(x,t)\right]  ^{-1}\mathbb{H}\left(  \Phi\right)  \right\} \\
&  =-2\partial_{x}^{2}\log\det\left(
\begin{array}
[c]{cc}%
1+\mathbb{H}_{+}(x,t) & i\left(  \mathbb{H}\left(  \Phi\right)  \right)
^{1/2}\\
i\left(  \mathbb{H}\left(  \Phi\right)  \right)  ^{1/2} & 1
\end{array}
\right) \\
&  =-2\partial_{x}^{2}\log\det\left(
\begin{array}
[c]{cc}%
1+\mathbb{H}_{+}(x,t) & -\mathbb{H}\left(  \Phi\right) \\
1 & 1
\end{array}
\right) \\
&  =-2\partial_{x}^{2}\log\det\left(
\begin{array}
[c]{cc}%
1+\mathbb{H}_{+}(x,t) & 1\\
-\mathbb{H}\left(  \Phi\right)  & 1
\end{array}
\right)  ,
\end{align*}
where $u_{+}(x,t)$ is the solution to (\ref{KdV}) with data $q_{+}$ and
$\mathbb{H}_{+}(x,t)=\mathbb{H}\left(  \varphi_{+}\right)  $.
\end{theorem}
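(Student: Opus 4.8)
The plan is to read off all five expressions from the single determinant representation of the solution recorded just above, using only the operator splitting of Theorem~\ref{sep of inft} together with the linear algebra of Fredholm determinants. Recall that for $q\in L_1^1$ one has $u(x,t)=-2\partial_x^2\log\det\{1+\mathbb{H}(x,t)\}$ with $\mathbb{H}(x,t)=\mathbb{H}(\varphi)$, $\varphi$ as in (\ref{fi}), and $\mathbb{H}(x,t)$ trace class. Substituting the decomposition $\mathbb{H}(x,t)=\mathbb{H}_+(x,t)+\mathbb{H}(\Phi)$ of Theorem~\ref{sep of inft} gives the first of the listed forms at once, while the same representation applied to the initial profile $q_+\in L_1^1$ (for which $\varphi_+$ plays the role of $\varphi$) identifies $-2\partial_x^2\log\det\{1+\mathbb{H}_+(x,t)\}$ with $u_+(x,t)$.

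Before the algebra two analytic points must be settled. First, every operator entering a determinant below has to be a trace-class perturbation of the identity. For $\mathbb{H}(\Phi)$ this is immediate: by Theorem~\ref{sep of inft} it equals $\mathbb{H}(x,t)-\mathbb{H}_+(x,t)$, a difference of two trace-class operators, the second being, through its $q_+$-analogue, the classical Marchenko operator for $\mathbb{L}_{q_+}$, which is trace class since $q_+\in L_1^1$; alternatively, $\Phi$ is the Cauchy transform along $\mathbb{R}+ih$ of the bounded function $\xi_{x,t}G$, hence analytic in $\{\operatorname{Im}k<h\}$, and the discussion in Section~\ref{HO} of symbols $G\xi_{\alpha,\beta}$ with $G$ analytic in the upper half plane puts $\mathbb{H}(\Phi)$ in $\mathfrak{S}_p$ for every $p$. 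Second, $1+\mathbb{H}_+(x,t)$ must be boundedly invertible; this is the unique solvability of the Marchenko equation for the real potential $q_+$, equivalently $\det\{1+\mathbb{H}_+(x,t)\}\neq0$, which is precisely what makes $u_+(x,t)$ well defined.

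Granting these, write $A:=\mathbb{H}_+(x,t)$ and $B:=\mathbb{H}(\Phi)$. Multiplicativity of the Fredholm determinant gives $\det\{1+A+B\}=\det\{1+A\}\det\{1+(1+A)^{-1}B\}$, and applying $-2\partial_x^2\log$ yields the second form. For the three block forms, read the identity $\det\left(\begin{array}{cc}A_{11}&A_{12}\\A_{21}&A_{22}\end{array}\right)=\det A_{11}\,\det(A_{22}-A_{21}A_{11}^{-1}A_{12})$ backwards with $A_{11}=1+A$, $A_{22}=1$, and the off-diagonal entries exactly as they appear in the statement (in the square-root form, $(\mathbb{H}(\Phi))^{1/2}$ being any operator with square $\mathbb{H}(\Phi)$). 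A one-line computation of each Schur complement $A_{22}-A_{21}A_{11}^{-1}A_{12}$, together with the cyclic identity $\det(1+XY)=\det(1+YX)$, shows that its determinant equals $\det\{1+(1+A)^{-1}B\}$; hence each block determinant equals $\det\{1+A\}\det\{1+(1+A)^{-1}B\}$ and, after $-2\partial_x^2\log$, reproduces the second — and therefore the first — form.

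I expect no real obstacle here: the theorem is essentially a formal consequence of Theorem~\ref{sep of inft} and the determinant identities. The only points that demand care are the two flagged above — that $\mathbb{H}_+(x,t)$, $\mathbb{H}(\Phi)$, $(1+\mathbb{H}_+(x,t))^{-1}\mathbb{H}(\Phi)$ and the $2\times2$ block operators are all trace-class perturbations of the identity (so that the Fredholm determinants, their multiplicativity, and the cyclic invariance are legitimate) and that $1+\mathbb{H}_+(x,t)$ is invertible — and each of these follows from the analysis of Section~\ref{HO} or from standard one-dimensional inverse scattering theory.
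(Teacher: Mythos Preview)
Your proposal is correct and follows essentially the same route as the paper: the authors simply state that Theorem~\ref{sep of inft} together with the Schur complement formula $\det\bigl(\begin{smallmatrix}A_{11}&A_{12}\\A_{21}&A_{22}\end{smallmatrix}\bigr)=\det A_{11}\det(A_{22}-A_{21}A_{11}^{-1}A_{12})$ readily imply the theorem, which is exactly the skeleton of your argument. Your write-up is in fact more careful than the paper's, since you explicitly flag and justify the trace-class and invertibility conditions needed for the Fredholm determinant manipulations to be legitimate.
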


This theorem is a manifestation of the unidirectional nature of the KdV
equation. The effect of the part of initial data supported on $\left(
-\infty,0\right)  $ is encoded in the Hankel operator $\mathbb{H}\left(
\Phi\right)  $ with an analytic symbol, while the part $\mathbb{H}_{+}(x,t)$
is solely determined by the data on $\left(  0,\infty\right)  $. Theorem
\ref{sep of inft for kdv} provides a convenient starting point to extending
the IST\ formalism to initial data $q$ beyond the realm of the short range
scattering. Since, in general, there is no inverse scattering procedure
available outside of the short range setting we have to rely on suitable
limiting arguments.

\section{Proof of the Main Theorem\label{main}}

With most of ingredients prepared in the previous sections very little is left
to prove Theorem \ref{MainThm}. Take $b<0$ and consider the problem
(\ref{KdV}) with initial data $q_{b}=\left.  q\right\vert _{\left(
b,\infty\right)  }$. By Theorem \ref{sep of inft for kdv} for its solution we
have
\begin{equation}
u_{b}(x,t)=-2\partial_{x}^{2}\log\det\left\{  1+\mathbb{H}_{+}(x,t)+\mathbb{H}%
\left(  \Phi_{b}\right)  \right\}  ,\label{u_b}%
\end{equation}
where%
\begin{align*}
\Phi_{b}\left(  k\right)   &  =-\frac{k+i}{2\pi i}\int_{\mathbb{R}+ih}%
\frac{\xi_{x,t}(\lambda)G_{b}(\lambda)}{\left(  \lambda+i\right)  \left(
\lambda-k\right)  }\ d\lambda,\ \ \ h>\max\left(  \kappa_{n}^{b}\right)  ,\\
G_{b} &  :=\frac{T_{+}^{2}R_{b}}{1-L_{+}R_{b}},
\end{align*}
and is the right reflection coefficient from $\left.  q\right\vert _{\left(
b,0\right)  }$. As is well-known (see, e.g. \cite{Deift79}), $R_{b}$ is a
meromorphic function on the entire plane, and \cite{GruRybSIMA15} uniformly on
compacts in $\mathbb{C}^{+}$
\begin{equation}
R_{b}\left(  \lambda\right)  \rightarrow\frac{i\lambda-m_{-}\left(
\lambda^{2}\right)  }{i\lambda+m_{-}\left(  \lambda^{2}\right)  }:=R\left(
\lambda\right)  ,\ \ b\rightarrow-\infty,\label{eq8.2}%
\end{equation}
where $m_{-}\left(  k^{2}\right)  $ is the Titchmarsh-Weyl m-function of
$\mathbb{L}_{q_{-}}^{D}$, the Schr\"{o}dinger operator on $L^{2}\left(
-\infty,0\right)  $ with a Dirichlet boundary condition at $0$. As is
well-known, $m_{-}\left(  \lambda\right)  $ is analytic on $\mathbb{C}$ away
from the spectrum of $\mathbb{L}_{q_{-}}^{D}$ which due to the condition
\ref{cond 1} is bounded from below (see e.g. \cite{Eastham74}). Consequently,
$R$ is\footnote{$R$ can be interpreted as the (right) reflection coefficient
from $q_{-}$ (see \cite{GruRybSIMA15}, \cite{RybPAMS18} for details).}
analytic in $\mathbb{C}^{+}$ away from purely imaginary points $\lambda$ such
that $\lambda^{2}$ is in the negative spectrum of $\mathbb{L}_{q_{-}}^{D}$.
Thus
\begin{equation}
\lim_{b\rightarrow-\infty}G_{b}=\frac{T_{+}^{2}R}{1-L_{+}R}=:G\label{limG}%
\end{equation}
is an analytic function on $\mathbb{C}^{+}$ away from a bounded set on the
imaginary line. In turn this means that $\Phi=\lim_{b\rightarrow-\infty}%
\Phi_{b}$ is an entire function and $\mathbb{H}\left(  \Phi_{b}\right)
\rightarrow\mathbb{H}\left(  \Phi\right)  \mathbb{\ }$in trace norm. Following
same arguments as in Section \ref{HO} (see also \cite{RybPAMS18} for more
details) we see that for every $n,m$%
\begin{equation}
\left\Vert \partial_{x}^{n}\partial_{t}^{m}\left[  \mathbb{H}\left(  \Phi
_{b}\right)  -\mathbb{H}\left(  \Phi\right)  \right]  \right\Vert
_{\mathfrak{S}_{1}}\rightarrow0,\ \ \ b\rightarrow-\infty.\label{1}%
\end{equation}
Turn now to
\[
\mathbb{H}_{+}(x,t)=\mathbb{H}\left(  \varphi_{+}\right)  =\mathbb{H}\left(
\phi_{+}\right)  +\mathbb{H}\left(  \xi_{x,t}R_{+}\right)  .
\]%
\[
\varphi_{x}\left(  k\right)  =\sum_{n}\frac{c_{n}e^{-2\kappa_{n}x}%
\,}{ik+\kappa_{n}}+e^{2ikx}R(k).
\]
Since $\phi_{+}$ is a rational function, $\mathbb{H}\left(  \phi_{+}\right)  $
is smooth in $\left(  x,t\right)  $ in trace norm.

By (\ref{rep for R+}) we have%
\begin{equation}
\mathbb{H}\left(  \xi_{x,t}R_{+}\right)  =\mathbb{H}\left(  \varphi
_{0}\right)  +\mathbb{H}\left(  \varphi_{1}\right)  , \label{phi 0 and 1}%
\end{equation}
where%
\begin{align*}
\varphi_{0}\left(  k\right)   &  :=\frac{T_{+}\left(  k\right)  }{2ik}%
\xi_{x,t}\left(  k\right)  \int_{0}^{\infty}e^{-2ikx}q\left(  x\right)  dx,\\
\varphi_{1}\left(  k\right)   &  :=\frac{T_{+}\left(  k\right)  }{\left(
2ik\right)  ^{2}}\xi_{x,t}\left(  k\right)  \int_{0}^{\infty}e^{-2ikx}%
Q^{\prime}\left(  x\right)  dx.\ \
\end{align*}
We remind that $\varphi_{0},\varphi_{1}$ are both bounded at $k=0$ as
$T_{+}\left(  k\right)  $ vanishes at $k=0$ to order 1\footnote{In fact, it
happens generically. For the so-called exceptional potentials $T\left(
0\right)  \neq0$ but an arbitrarily small perturbation turns such a potential
into generic. In our case it can be achieved by merely shifting the data $q$
(the KdV is translation invariant).}. Apparently,
\begin{equation}
\partial_{x}\mathbb{H}\left(  \varphi_{0}\right)  =\mathbb{H}\left(  T_{+}%
\xi_{x,t}G_{0}\right)  ,\ \ \ \partial_{x}^{2}\mathbb{H}\left(  \varphi
_{1}\right)  =\mathbb{H}\left(  T_{+}\xi_{x,t}G_{1}\right)  . \label{deriv}%
\end{equation}
where%
\[
G_{0}\left(  k\right)  :=\int_{0}^{\infty}e^{-2ikx}q\left(  x\right)
dx,\ \ G_{1}\left(  k\right)  :=\int_{0}^{\infty}e^{-2ikx}Q^{\prime}\left(
x\right)  dx.
\]
The symbols $f\xi_{x,t}G_{0,1}$ in (\ref{deriv}) are different from the ones
studied in Section \ref{HO} by a factor $T_{+}$ of the form $T_{+}=h/B$, where
$h\in H_{+}^{\infty}$ and $B$ is the finite Blaschke product with simple zeros
at $\left(  i\kappa_{n}^{+}\right)  $. This is however a purely technical
circumstance in the way of applying Theorem \ref{Trace class theorem}. The
easiest way to circumvent it is to alter our original $q$ by performing the
Darboux transform on $q_{+}$ removing all (negative) bound states of
$\mathbb{L}_{q_{+}}$. Then $T_{+}=h\in H_{+}^{\infty}$. But if $h\in
H_{+}^{\infty}$ and $\varphi\in L^{\infty}$ one easily sees that\footnote{Note
that $h\left(  -k\right)  =\overline{h\left(  k\right)  }$.}
\[
\mathbb{H}(h\varphi)=\mathbb{T}\left(  \overline{h}\right)  \mathbb{H}%
(\varphi),
\]
where $\mathbb{T}\left(  \overline{h}\right)  =\mathbb{P}_{+}\overline{h}$ is
the Toeplitz operator with symbol $\overline{h}$. The letter is a bounded
operator independent of $\left(  x,t\right)  $ and smoothness in trace norm of
$\mathbb{H}\left(  T_{+}\xi_{x,t}G_{0,1}\right)  $ with respect of $\left(
x,t\right)  $ is the same as $\mathbb{H}\left(  \xi_{x,t}G_{0,1}\right)  $. As
is well-known, adding back the previously removed bound states results in
adding solitons corresponding to $-\left(  \kappa_{n}^{+}\right)  ^{2}$ (which
are of Schwartz class).

Recalling from Proposition \ref{on R} that%
\[
\left\vert Q^{\prime}\left(  x\right)  \right\vert \leq C_{1}\left\vert
q\left(  x\right)  \right\vert +C_{2}\int_{x}^{\infty}\left\vert q\right\vert
,
\]
one concludes that if $q\in L_{N}^{1}$ then $Q^{\prime}\in L_{N-1}^{1}$. By
Theorem \ref{Trace class theorem} if $N=5/2$ then $\mathbb{H}\left(  \xi
_{x,t}G_{0}\right)  $ and $\mathbb{H}\left(  \xi_{x,t}G_{1}\right)  $ are
differentiable in $x$ in $\mathfrak{S}_{1}$ four and three times respectively.
By (\ref{phi 0 and 1}) and (\ref{deriv}) $\mathbb{H}\left(  \xi_{x,t}%
R_{+}\right)  $ is differentiable in $x$ in $\mathfrak{S}_{1}$ five times and
hence so is $\mathbb{H}_{+}(x,t)$. Thus, since $\partial_{t}\xi_{x,t}%
=\partial_{x}^{3}\xi_{x,t}$, the formula (\ref{u_b}) defines a classical
solution $u_{b}\left(  x,t\right)  $ with initial data $q_{b}$ and it remains
to let $b\rightarrow-\infty$. But it follows from (\ref{1}) that%
\begin{equation}
\lim_{b\rightarrow-\infty}u_{b}\left(  x,t\right)  =-2\partial_{x}^{2}\log
\det\left\{  1+\mathbb{H}_{+}(x,t)+\mathbb{H}\left(  \Phi\right)  \right\}
:=u(x,t)\label{u}%
\end{equation}
and $u(x,t)$ is a classical solution to (\ref{KdV}). Theorem \ref{MainThm} is proven.

In fact, we have proven a stronger statement

\begin{theorem}
\label{smoothing}If in Theorem \ref{MainThm} $q\in L_{N}^{1}$ then $u\left(
x,t\right)  $ is continuously differentiable $\left\lfloor 2N\right\rfloor -2$
times in $x$ and $\left\lfloor \left(  2N-2\right)  /3\right\rfloor $ times in
$t$.
\end{theorem}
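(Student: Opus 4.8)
The plan is to rerun the proof of Theorem \ref{MainThm} from Section \ref{main} with the single value $N=5/2$ replaced by a general $N$, and then read off how many derivatives survive. Recall from that proof that $u(x,t)=-2\partial_{x}^{2}\log\det\{1+\mathbb{H}_{+}(x,t)+\mathbb{H}(\Phi)\}$, obtained as the limit (\ref{u}) of $u_{b}$, where by (\ref{limG}) the symbol $\Phi$ is \emph{entire}. The contour-deformation argument of Section \ref{HO} (pushing the line of integration into the half-plane where $\xi_{x,t}$ decays super-exponentially) shows that $\mathbb{H}(\Phi_{b})$ and $\mathbb{H}(\Phi)$ are $C^{\infty}$ in $(x,t)$ in trace norm and that $\|\partial_{x}^{n}\partial_{t}^{m}[\mathbb{H}(\Phi_{b})-\mathbb{H}(\Phi)]\|_{\mathfrak{S}_{1}}\to 0$ for \emph{every} $n,m$, i.e.\ (\ref{1}) in full strength; this is what legitimizes differentiating (\ref{u}) under the limit. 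Hence $\mathbb{H}(\Phi)$ contributes nothing to the regularity count and everything reduces to the trace-norm smoothness in $(x,t)$ of $\mathbb{H}_{+}(x,t)$.

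For that I would follow Section \ref{main} verbatim. Write $\mathbb{H}_{+}(x,t)=\mathbb{H}(\phi_{+})+\mathbb{H}(\xi_{x,t}R_{+})$: the rational part $\mathbb{H}(\phi_{+})$ is $C^{\infty}$, and by (\ref{rep for R+})--(\ref{phi 0 and 1}), $\mathbb{H}(\xi_{x,t}R_{+})=\mathbb{H}(\varphi_{0})+\mathbb{H}(\varphi_{1})$. The factor $T_{+}$ is removed exactly as in Section \ref{main} (a Darboux transform on $q_{+}$ killing the bound states, which only adds Schwartz solitons, followed by $\mathbb{H}(T_{+}\cdot)=\mathbb{T}(\overline{T_{+}})\mathbb{H}(\cdot)$), so one is left with the operators of Section \ref{HO}. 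Proposition \ref{on R} gives $q\in L_{N}^{1}\Rightarrow Q'\in L_{N-1}^{1}$ through (\ref{props of Q}); feeding this and the $L_{N}^{1}$ decay of $q_{+}$ into Theorem \ref{Trace class theorem}, together with the identities (\ref{deriv}) (one extra $x$-derivative of $\mathbb{H}(\varphi_{0})$, and two of $\mathbb{H}(\varphi_{1})$, clear the $(2ik)^{-1}$ and $(2ik)^{-2}$ factors and return us to symbols $\xi_{x,t}G_{0}$, $\xi_{x,t}G_{1}$), reproduces the count of Section \ref{main}: $\mathbb{H}(\varphi_{0})$ and $\mathbb{H}(\varphi_{1})$ are each $\lfloor 2N\rfloor$ times continuously differentiable in $x$ in $\mathfrak{S}_{1}$, and therefore so is $\mathbb{H}_{+}(x,t)$.

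It remains to turn the $x$-count into the $(x,t)$-count. The whole $(x,t)$-dependence of $\mathbb{H}_{+}$ and $\mathbb{H}(\Phi)$ enters only through $\xi_{x,t}(k)=e^{i(8k^{3}t+2kx)}$, and $\partial_{t}\xi_{x,t}=\partial_{x}^{3}\xi_{x,t}$; hence every mixed derivative $\partial_{t}^{m}\partial_{x}^{n}$ of $\mathbb{H}_{+}(x,t)+\mathbb{H}(\Phi)$ coincides (up to sign) with $\partial_{x}^{3m+n}$ and so exists in trace norm whenever $3m+n\leq\lfloor 2N\rfloor$. Since the Fredholm determinant is an entire function of its trace-class argument, $\log\det\{1+\mathbb{H}_{+}(x,t)+\mathbb{H}(\Phi)\}$ inherits this, and $u=-2\partial_{x}^{2}\log\det\{\cdots\}$ is $\partial_{t}^{m}\partial_{x}^{n}$-continuously-differentiable whenever $3m+n+2\leq\lfloor 2N\rfloor$. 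Taking $m=0$ yields $\lfloor 2N\rfloor-2$ continuous $x$-derivatives; taking $n=0$ yields $\lfloor(\lfloor 2N\rfloor-2)/3\rfloor=\lfloor(2N-2)/3\rfloor$ continuous $t$-derivatives, which is the claim.

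The conceptual heavy lifting — Theorem \ref{Trace class theorem}, which rests on Peller's trace-class criterion and on steepest descent with coalescing stationary points and poles — is imported wholesale from \cite{GruRybMatNotes18}, so no new hard analysis is required. The only place needing genuine care is the bookkeeping in the second paragraph: keeping the half-integer weights straight through the floor functions, verifying that the derivatives recovered from the negative powers of $2ik$ in $\varphi_{0},\varphi_{1}$ are really retained given only the $L_{N-1}^{1}$ decay of $Q'$, and checking that the Darboux/Toeplitz reduction and the passage $b\to-\infty$ commute with all the derivatives involved. Since all of this is already present, for $N=5/2$, in Section \ref{main}, the proof of Theorem \ref{smoothing} amounts to that argument written once with $N$ kept as a parameter.
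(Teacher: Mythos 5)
Your proposal is essentially the paper's own proof: Theorem \ref{smoothing} is exactly the proof of Theorem \ref{MainThm} in Section \ref{main} rerun with the exponent $N$ kept as a parameter, using Theorem \ref{Trace class theorem} on $G_{0}$ (with $q\in L_{N}^{1}$) and on $G_{1}$ (with $Q^{\prime}\in L_{N-1}^{1}$ via (\ref{props of Q})), the identities (\ref{deriv}) to absorb the $(2ik)^{-1}$ and $(2ik)^{-2}$ factors, the relation $\partial_{t}\xi_{x,t}=\partial_{x}^{3}\xi_{x,t}$ to convert the $x$-count into the $t$-count, and the fact that $\mathbb{H}(\Phi)$ is $C^{\infty}$ in trace norm and so does not constrain the regularity. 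One bookkeeping point deserves a flag, though it is inherited from the paper rather than introduced by you: Theorem \ref{Trace class theorem} applied to $Q^{\prime}\in L_{N-1}^{1}$ literally gives $\lfloor 2(N-1)\rfloor-1=\lfloor 2N\rfloor-3$ trace-norm $x$-derivatives of $\mathbb{H}(\xi_{x,t}G_{1})$, hence $\lfloor 2N\rfloor-1$ of $\mathbb{H}(\varphi_{1})$ after (\ref{deriv}), not the $\lfloor 2N\rfloor$ you assert; the paper makes the same one-higher claim at $N=5/2$ (``four and three times respectively''), so your count faithfully reproduces the paper's, but as written it does not follow from the quoted form of Theorem \ref{Trace class theorem} and would, taken literally, cost one derivative in the final statement.
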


We conclude this section with yet another solution formula, which can be
viewed as a generalized Dyson formula.

\begin{theorem}
Under conditions of Theorem \ref{MainThm}, the solution to (\ref{KdV}) can be
represented by%
\[
u(x,t)=-2\partial_{x}^{2}\log\det\left(  1+\mathbb{H}(\varphi_{x,t})\right)  ,
\]
with%
\begin{equation}
\varphi_{x,t}(k)=\int_{0}^{h_{0}}\frac{\xi_{x,t}(is)\,}{s+ik}d\rho\left(
s\right)  +\xi_{x,t}(k)R(k),\label{symbol}%
\end{equation}
where $R$ is the right reflection coefficient of $q$ and $d\rho$ is a positive
finite measure.
\end{theorem}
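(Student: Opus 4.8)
The plan is to extract the measure $d\rho$ from the analytic structure of $G$ established in Proposition \ref{on R} and then reassemble the Dyson-type formula using Theorem \ref{sep of inft for kdv}. First I would start from the representation $u(x,t)=-2\partial_x^2\log\det\{1+\mathbb H_+(x,t)+\mathbb H(\Phi)\}$ obtained in \eqref{u}. Since $\mathbb H(\varphi_+)=\mathbb H_+(x,t)$ has symbol $\varphi_+$ given by \eqref{fi +} and, by construction, $\mathbb H(\Phi)=\mathbb H(\mathbb P_-\xi_{x,t}G)$ up to the rational pieces bookkept in the proof of Theorem \ref{sep of inft}, the combined operator $1+\mathbb H_+(x,t)+\mathbb H(\Phi)$ equals $1+\mathbb H(\varphi_{x,t})$ for a single symbol $\varphi_{x,t}$. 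Tracking the identities in the proof of Theorem \ref{sep of inft} in reverse, the pole part $\phi+\phi_+$ coming from the bound states of $\mathbb L_q$ reassembles with $\Phi$ and $\xi_{x,t}R_+$ into $\xi_{x,t}R+(\text{rational})$; so the task reduces to identifying the rational/meromorphic-in-$\mathbb C^+$ contribution in a form $\int_0^{h_0}\xi_{x,t}(is)(s+ik)^{-1}\,d\rho(s)$.

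Next I would analyze where $G$ (equivalently $\varphi_{x,t}-\xi_{x,t}R$) has singularities in $\mathbb C^+$. From \eqref{limG} and the discussion after it, $G=T_+^2 R/(1-L_+R)$ is analytic in $\mathbb C^+$ except at the finitely many imaginary points $i\kappa_n$ (poles with residues $ic_n$ by \eqref{residues}) together with a bounded subset of the imaginary axis inherited from the Titchmarsh–Weyl $m$-function $m_-$ of $\mathbb L_{q_-}^D$: indeed $R(\lambda)=(i\lambda-m_-(\lambda^2))/(i\lambda+m_-(\lambda^2))$ has poles exactly where $i\lambda+m_-(\lambda^2)=0$, i.e. at imaginary $\lambda$ with $\lambda^2$ in the negative spectrum of the half-line operator, and by condition \eqref{cond 1} that spectrum is bounded below, hence these poles accumulate at most at a finite point $ih_0$ on the imaginary axis. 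Each such pole of $G$ at $\lambda=is$ produces, via the co-analytic-part formula \eqref{proj}, a term $c(s)/(s+ik)$ in the symbol with a positive weight $c(s)$ (positivity because, like the classical $c_n^2>0$, these residues arise from norming constants / the positive measure in the Herglotz representation of $-m_-$). Collecting the point masses at the $i\kappa_n$ together with the (absolutely continuous or discrete) contribution along $(0,ih_0)$ gives a single positive finite measure $d\rho$ supported on $[0,h_0]$, with $\int_0^{h_0}\xi_{x,t}(is)(s+ik)^{-1}\,d\rho(s)$ reproducing $\phi(k)$ plus the meromorphic part of $\Phi$.

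The remaining step is purely formal: substitute $\varphi_{x,t}$ so constructed and invoke the Dyson formula \eqref{Dyson}, which applies because $\mathbb H(\varphi_{x,t})$ is trace class (the $\xi_{x,t}$-oscillation gives an entire pseudoanalytic extension for the $G$-part as in Section \ref{HO}, and the $\xi_{x,t}R_+$-part is trace class by Theorem \ref{Trace class theorem} under \eqref{cond 2}), so $\det(1+\mathbb H(\varphi_{x,t}))$ is a genuine Fredholm determinant; taking $-2\partial_x^2\log$ then recovers $u(x,t)$ by \eqref{u}. The main obstacle is the second step: proving that the poles of $R$ on the imaginary axis stay in a bounded set $[0,ih_0]$ and that the associated spectral weights assemble into a \emph{positive} finite measure. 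This is where condition \eqref{cond 1} (essential semiboundedness from below, hence $\mathbb L_{q_-}^D$ bounded below and in the limit point case) is essential, and it is handled via the Herglotz/Nevanlinna representation of the $m$-function $m_-$; everything before and after is bookkeeping with the already-established identities.
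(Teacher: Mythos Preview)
The paper does not actually prove this theorem here: after stating it, the authors remark that only the form \eqref{symbol} of the symbol needs proving and then defer the entire argument to their earlier paper \cite{GruRybSIMA15}, where the pair $(R,\rho)$ is constructed and analyzed. So there is no in-text proof to compare against; your sketch is, in effect, an attempt to reconstruct what \cite{GruRybSIMA15} must contain.

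Your outline is in the right spirit: reverse the bookkeeping of Theorem \ref{sep of inft}, read off the singularities of the limiting $G$ on the imaginary axis via the Titchmarsh--Weyl function $m_-$, and package them as a measure $d\rho$. But the step you yourself flag as the ``main obstacle'' is genuinely the whole content of the theorem, and your treatment of it is not yet a proof. Two concrete issues. First, the negative spectrum of $\mathbb L_{q_-}^D$ need not be discrete under \eqref{cond 1}; it can contain intervals, so the singular set of $R$ (hence of $G$) on $i(0,h_0)$ is not in general a set of poles with residues, and the phrase ``each such pole \dots\ produces a term $c(s)/(s+ik)$'' breaks down. One needs instead a contour-deformation/jump argument (Sokhotski--Plemelj across the imaginary segment) producing an honest Stieltjes integral, not a residue sum. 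Second, positivity and finiteness of $d\rho$ do not follow from a one-line appeal to ``the Herglotz representation of $-m_-$'': the object whose positivity matters is the boundary measure of $G$ (or equivalently of the full reflection data), and relating that to the spectral measure of $m_-$ through the fractional-linear map $R=(i\lambda-m_-)/(i\lambda+m_-)$ and the factor $T_+^2/(1-L_+R)$ requires a computation that is carried out in \cite{GruRybSIMA15}. A small bookkeeping slip: in reversing Theorem \ref{sep of inft} the rational pieces combine as $\mathbb P_-\xi_{x,t}G=\Phi-\phi+\phi_+$, so it is $\phi$ (not $\phi+\phi_+$) that reassembles with $\Phi$ and $\xi_{x,t}R_+$ to give $\xi_{x,t}R$ plus the discrete part.
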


Note that the pair $\left(  R,\rho\right)  $ can be viewed as scattering data
associated with $\mathbb{L}_{q}$ and only (\ref{symbol}) needs proving. It is
proven in our \cite{GruRybSIMA15} where a complete treatment of $\left(
R,\rho\right)  $ is also given.

\section{Conclusions\label{conclusions}}

Theorem \ref{smoothing} says that, loosely speaking, the KdV flow\emph{
}instantaneously smoothens any (integrable) singularities of $q\left(
x\right)  $ as long $q\left(  x\right)  =o\left(  x^{-2}\right)  $,
$x\rightarrow+\infty$. Such an effect is commonly referred to as dispersive
smoothing. This smoothing property becomes stronger as the rate of decay at
$+\infty$ increases, the behavior at $-\infty$ playing no role. In
\cite{RybCommPDEs2013} we show that if ($C,\delta>0$)%
\begin{equation}
q\left(  x\right)  =O\left(  \exp\left(  -Cx^{\delta}\right)  \right)
,\ \ \ x\rightarrow+\infty, \label{strong decay}%
\end{equation}
then (1) if $\delta>1/2$ then $u(x,t)$ is meromorphic with respect to $x$ on
the whole complex plane (with no real poles) for any $t>0$; (2) if
$\delta=1/2$ then $u(x,t)$ is meromorphic in a strip around the $x-$axis
widening proportionally to $\sqrt{t}$; (3) for $0<\delta<1/2$ the solution
need not be analytic but is at least Gevrey smooth.

Actually, the requirement that $q$ is locally integrable can be lifted. By
employing the arguments from our \cite{GruRemRyb2015} we may easily extend all
our results to include $H^{-1}$ type singularities (like Dirac $\delta
-$functions, Coulomb potentials, etc.) on any interval $\left(  -\infty
,a\right)  $.

The condition (\ref{cond 1}) is optimal. Indeed, what we actually need is
semiboundedness of $\mathbb{L}_{q}$ from below, which is guaranteed by
(\ref{cond 1}). If $q$ is negative then (\ref{cond 1}) becomes also necessary
\cite{Eastham74}.

The absence of decay at $-\infty$ ruins any hope that classical conservations
laws would take place. We do not however rule out existence of some
regularized conservation laws or at least some energy estimates. It would of
course be important to find such estimates.

\end{document}